\newif\ifshowproofs
\newif\ifnotshowproofs
\newcommand{\setR}{\mathbb{R}}
\newcommand{\demand}{d}
\DeclareMathOperator*{\argmin}{argmin}   
\DeclareMathOperator*{\argmax}{argmax}   
\newcommand{\F}{\mathcal{F}}
{\bfseries}{}
\begin{document}


\title{On the Nucleolus of a Class of Linear Production Games}


\author{Mourad Ba\"iou\and
Gianpaolo Oriolo \and
Gautier Stauffer}

\institute{LIMOS Lab. CNRS/UCA, Clermont-Ferrand, France \email{mourad.baiou@isima.fr}\and
Dipartimento di Ingegneria Civile ed Ingegneria Informatica, Universit\`a Tor Vergata, Rome,  Italy, \email{oriolo@disp.uniroma2.it}  
 \and
Faculty of Business and Economics (HEC Lausanne), Department of Operations, University of Lausanne, Lausanne, Switzerland \email{gautier.stauffer@unil.ch}}


\institute{}

\maketitle


\begin{abstract}
We study the nucleolus in a class of cooperative games that model settings in which multiple agents collaborate by sharing demands and production-distribution capacities for a specific commodity, potentially across multiple markets, to improve overall efficiency. These \emph{production-distribution games} arise in practical applications such as horizontal collaboration in last-mile logistics, and — as we show in this paper — they form a well-structured subclass of the \emph{linear production games} introduced by~\citet{owen1975core}. Linear production games is a subclass of \emph{totally balanced games}, first introduced by~\citet{shapley1969market}, a class of games in cooperative game theory that always admits a nonempty core.

Building upon a characterization for the core of linear production games given by~\citet{owen1975core}, we show that a point in the core may be efficiently found for production-distribution games, and we turn our attention to the computation of the nucleolus. 
Computing the nucleolus is in general NP-hard. As for totally balanced games, positive results are limited to a few notable families, such as minimum-cost spanning tree games~\citep{granot1984core}, assignment games~\citep{solymosi1994algorithm}, and convex games~\citep{faigle2001computation}. While linear production games may appear structurally more tractable,~\citet{deng2009finding} demonstrated that computing the nucleolus in this class remains NP-hard. This underscores the importance of identifying subclasses within these games where the nucleolus can be computed efficiently. Production-distribution games represent a promising subclass in which efficient computation of the nucleolus may still be achievable.

In addition to establishing that production-distribution games are linear production games, our main contributions center on the \emph{uncapacitated} variant of these games. We identify structural properties that enable polynomial-time computation of the nucleolus and introduce algorithmic techniques that differ significantly from those used in previously known tractable cases:

\begin{itemize}
    \item \emph{Core as a singleton.} We give a polynomial-time characterization of instances where the core reduces to a singleton, allowing the nucleolus to be computed directly via linear programming duality. This characterization builds upon conditions that are reasonable in practical settings and interesting from a theoretical point of view as they go well beyond classical sufficient conditions for single-point cores.

\item \emph{Fixed number of markets.} We design a polynomial-time algorithm for each iteration of the Maschler scheme, based on a separation oracle that reduces to polynomially many applications of a general-purpose lemma. Given $a \in \mathbb{R}^p$ and $b \in \mathbb{R}$, the lemma decides in polynomial time whether there exists a subset $S \subseteq \{1,\dots,p\}$ such that $a(S) < b$ and $S \cup Q$ is not contained in the span of prescribed vectors $S_1,\dots,S_k$, for some $Q$. We believe this lemma is of independent interest.

    \item \emph{Single market case.} We develop a faster combinatorial algorithm that exploits the structure of optimal solutions in each iteration of the Maschler scheme. We show that a carefully designed primal-dual algorithm converges in a single step per iteration, yielding an efficient iterative procedure with runtime $O(n^4)$, without relying on low-level optimization. We also show that, in this case, there always exists a family of only $2n-1$ sets that is a characterization set for the nucleolus. A characterization set~\citep{Granot98,reijnierse98} is a collection of coalitions that determines the nucleolus by itself. To the best of our knowledge, the latter result cannot be derived from (sufficient) conditions that are known for characterization sets~\citep{skizlai}.
\end{itemize}

These results provide expand the catalogue of linear production and totally balanced games for which the nucleolus can be computed efficiently and provide new algorithmic tools that may inspire further progress in cooperative game theory. They also raise open questions about the complexity for production-distribution games of nucleolus computation in both the general uncapacitated case, that is when the number of markets is not fixed, and the capacitated case, already with a single market.

\ifnotshowproofs 
IN THIS VERSION OF THE MANUSCRIPT ALL PROOFS ARE IN APPENDIX
\fi
\end{abstract}

\newpage
\section{Introduction}\label{definition1}

We consider a situation where a set $N=\{1,...,n\}$ of $n\in \mathbb{N}\setminus \{0\}$ price-taker companies, producing the same 
commodity product are willing to cooperate to increase their profit. In the most general setting, the commodity product is sold in a set $M=\{1,...,m\}$ of $m \in \mathbb{N}\setminus \{0\}$ of different markets or countries. In each market $j\in M$, each unit of the commodity is sold at a common price $r_j$, and the cost bore by company $i\in N$ for producing and/or transporting each unit of the commodity from its production-distribution centers to market $j$ is $c_{ij}$. We introduce $\alpha_{ij}: = r_j - c_{ij}$, for each $i\in N$ and $j\in  M$, and we assume w.l.o.g.\footnote{In production-distribution games, we may also consider situations where $\alpha_{ij}= r_j - c_{ij}< 0$ for some $i\in N$ and $j\in M$. Indeed there is no reason to serve any demand on market $j$ from a player $i$ if  $r_j < c_{ij}$, as we would be better off by not serving this demand. We can thus redefine any $c_{ij}$ such that $ c_{ij} > r_j$ by setting $c_{ij} = r_j$: if in an optimal solution to~(\ref{P2}), $y_{ij}$ is not zero for such a situation, one should simply remember to set it to zero in a post-processing phase.
} that each $\alpha_{ij}\geq 0$, i.e., $r_j\geq c_{ij}$, for each $i\in N$ and $j\in M$. Finally, each company $i\in N$ owns a part {$d_{ij}\in\setR_+$} of the total demand $d_j:=\sum_{i\in N} d_{ij}$ in market $j$, and has an upper bound $q_i\geq 0$ on its production capacity.  When $q_i> \sum_{i\in N} \sum_{j\in M} d_{ij}$ for all $i\in N$, we say that the production-distribution game is \emph{uncapacitated}. 
 We refer to the case where $m = 1$ as the \emph{single market} case, and to the case where $m > 1$ as the \emph{multi market} case.

We associate to the above setting a cooperative game $(N,v)$ that we call the {\em production-distribution game}. In the production-distribution game, the value function $v:2^N\mapsto \setR_+$ represents the maximum profit a coalition could make by reshuffling demands among its members. More precisely, if a subset $S\subseteq N$ of players collaborate, they achieve a total profit of $v(S)$ with:
\begin{eqnarray}
v(S) = \max \sum_{i\in S}\sum_{j\in M}  \alpha_{ij} y_{ij} & s.t. \label{P2}\\
\sum_{i\in S}y_{ij} \leq \sum_{i\in S}d_{ij} & \forall j\in M \nonumber \\
\sum_{j\in M}y_{ij} \leq q_i & \forall i\in S \nonumber \\
y_{ij} \geq 0 & \forall i\in S, j\in M \nonumber 
\end{eqnarray}

where $y_{ij}$ is the amount of commodity that company $i$ produces and distributes  for market $j$. 
Note that $v(S)$ is well-defined as it is a linear program whose set of feasible solutions is nonempty and bounded. It is nonempty because $\bar y=0$ is a feasible solution. It is  bounded because of the constraints $\sum_{i\in {S}}y_{ij} \leq \sum_{i\in {S}}d_{ij}$ and $y\geq 0$.

The development of this model was motivated by a collaboration between the third author and La Poste, France. La Poste operates three distinct parcel distribution branches — Chronopost (express delivery), DPD (for business clients), and Colissimo (for private customers) — which function independently. In certain urban areas, each branch typically operates its own sorting platform on the outskirts of the city, from which parcels are dispatched to various micro-hubs located in the city center (from which they are typically delivered to the final customers with cargo-bikes). A natural way to rationalize last-mile delivery would be to reshuffle parcels among branches before they enter the sorting platform, based on their final destination hub (see Fig.~\ref{fig1} for an illustration). However, despite belonging to the same parent group, each branch is required to demonstrate strong individual performance. As such, ensuring a fair allocation of the collaborative benefits is essential to gaining the support of the three branch managers required to implement such a solution. Note that the cost of reshuffling the demand among the sorting platforms is negligible compared to the cost of last-mile delivery in such a setting.

\begin{figure}[h]
    \centering
    \includegraphics[width=\textwidth]{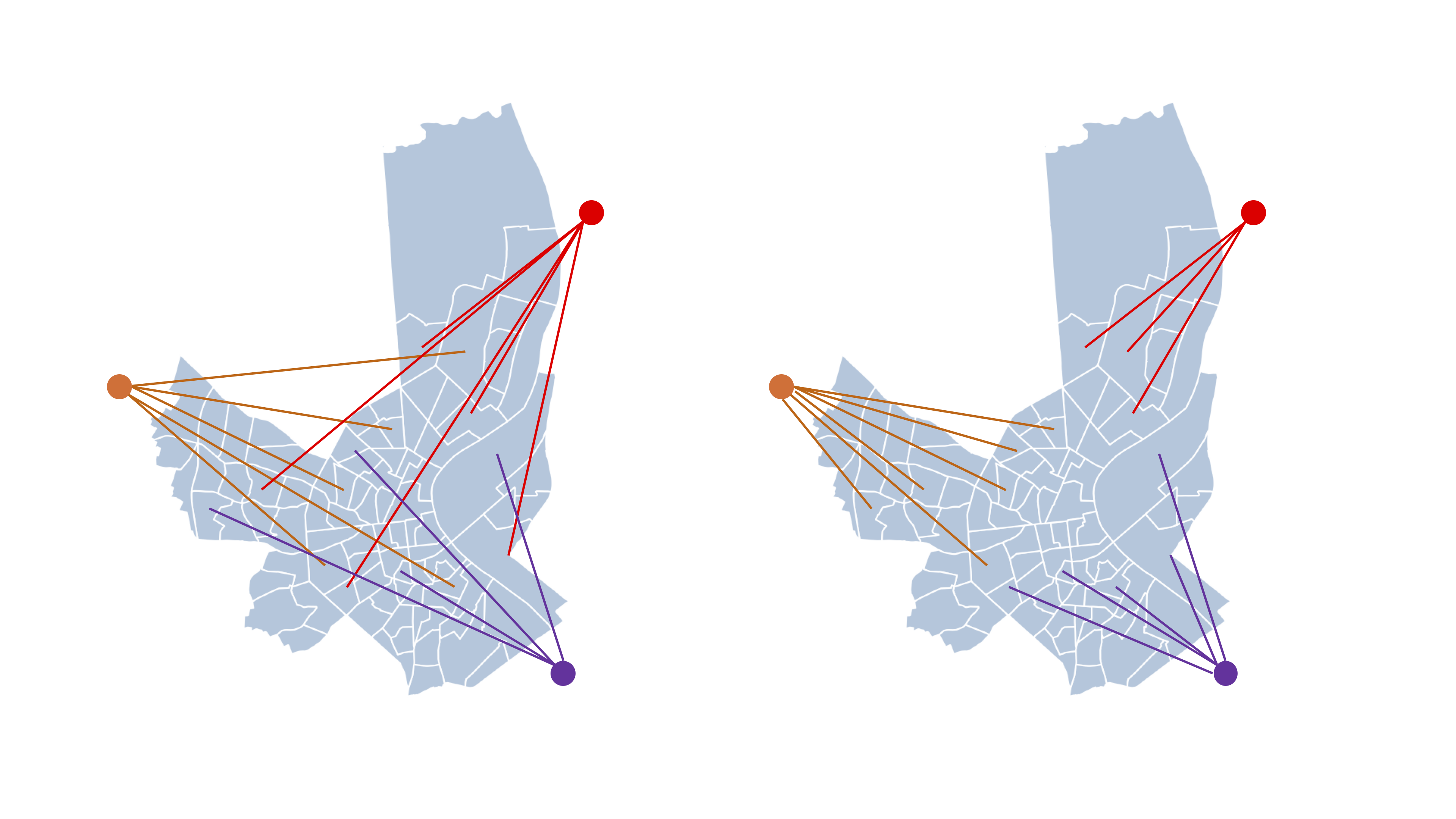}
    \caption{\footnotesize A stylized illustration of parcel flows in the city of Bordeaux, where the three companies (shown in red, brown, and purple) operate separate sorting platforms and serve different urban hubs. The left panel shows the situation when each company operates independently; the right panel illustrates the potential improvement if the companies collaborate. The total distance traveled is visibly shorter in the collaborative case.}
    \label{fig1}
\end{figure}

There are different standard solution concepts in cooperative game theory when it comes to splitting (transferable) utility fairly among the different actors involved in a coalition. The most popular are the core~\citep{gillies1959solutions}, the Shapley value~\citep{shapley1953value} and the nucleolus~\citep{schmeidler1969nucleolus}. In this work we are mainly interested in the core and the nucleolus.

\subsection*{Structure of the document}

In Section~\ref{lpg} we first show that the games under study constitute a special case of linear production games, which enables us to identify a core allocation through linear duality. We then turn to the uncapacitated setting for the study of the nucleolus. Section~\ref{updg} establishes several useful properties and specifies the conditions under which the core collapses to a singleton—the nucleolus. We subsequently design polynomial-time algorithms for computing the nucleolus when the number of markets is fixed. Each algorithm follows the Maschler scheme~\citep{maschler1979geometric}, summarised in Section~\ref{definition_nuc}. Section~\ref{singleuncap} shows how each iteration of the scheme can be executed in polynomial time using separation oracles and disjunctive reasoning, while Section~\ref{sec:comb} presents a dedicated and fast(er) combinatorial algorithm for the single market case and shows that in this case there exists a characterization set of linear length.

\section{Relation to linear production games}\label{lpg}

A well-established model for collaboration in production settings, known as \emph{linear production games}, was introduced and studied by Owen~\citep{owen1975core}. Linear production games is a subclass of \emph{totally balanced games}, first introduced by~\citet{shapley1969market}, a class of games in cooperative game theory that always admits a nonempty. Namely, a game $(N,v)$ is {\em balanced} if and only its core is nonempty (see below for the definition of core). The game is then {\em totally balanced} if also the subgame related to each coalition
of $N$ has a nonempty core: the subgame related to a coalition $S\subseteq N$ is the game $(S, v_S)$ where $v_S$ is the restriction of mapping $v$ to the subsets of $S$.

In Owen’s model, there are $n$ players, and each player possesses a certain amount of $r$ different resources. The resource vector of player $i$, for $i \in \{ 1,2,\ldots,n\}$, is given by $b^i = (b^i_1, b^i_2, \ldots, b^i_r)^\top$, where $b^i_k \geq 0$ denotes the amount of the $k$-th resource held by player~$i$. These resources can be used to produce $p$ different products, and each unit of product $j$, for $j\in \{ 1,2,\ldots,p$\}, can be sold at a fixed market price $c_j$. We denote the price vector by $c = (c_1, c_2, \ldots, c_p)$. Let $A = [a_{kj}]_{r \times p}$ be the linear production matrix\footnote{It is assumed that the production process is exogenous—that is, fixed in advance and independent of the identity or number of players involved.}, where $a_{kj}$ represents the amount of resource $k$ required to produce one unit of product~$j$ 
. The resulting linear production game is the pair $(N, v')$ defined as follows:

\begin{itemize}
  \item[(i)] The player set is $N = \{1, 2, \ldots, n\}$;
  \item[(ii)] For each coalition $S \subseteq N$, $v'(S)$ denotes the maximum profit that coalition $S$ can generate using the resources collectively held by its members, \emph{i.e.},
  \[
  v'(S) = \max \left\{ cx \;:\; Ax \leq \sum_{i \in S} b^i,\; x \geq 0 \right\}.
  \]
\end{itemize}

Production-distribution games can be formulated as linear production games, as stated below.

\begin{lemma}\label{newcor}
  Production-distribution games are linear production games.
\end{lemma}

\newcommand{\myproofblockA}{ 
 \ifnotshowproofs
  \subsection{Proof of Lemma~\ref{newcor}}
 \fi 
 \begin{proof}
    Consider a production-distribution game with $N$ players and $M$ markets, with demand $d_{ij}$, per-unit profit $\alpha_{ij}\geq 0$ for $i \in N$ and $j \in M$, and capacities $q_i$ for $i \in N$. We build an equivalent linear production game where the players are the elements in $N$, each pair $(i,j)$, $i \in N$ and $j \in M$, represents a product, and the elements in $N \cup M$ represent resources (production and demand resources, respectively). Each player $i \in N$ is endowed with a resource vector $b^i\in \setR^{n+m}$ with $b^i_i = q_i$ and $b^i_{i'} = 0$ for each $i' \neq i \in N$, and $b^i_{n+j} = d_{ij}$ for each $j \in M$.

    Consider in addition that each product $(i,j)$ consumes $a_{i,(i,j)} = 1$ unit of resource $i \in N$ and $a_{n+j,(i,j)} = 1$ unit of resource $j \in M$, and no other; and define $c_{(i,j)}=\alpha_{ij}$. The problem
    \[
    \max \left\{ cx \;:\; Ax \leq \sum_{i \in S} b^i,\; x \geq 0 \right\}
    \]
    is equivalent to
    \begin{eqnarray}
     \max \sum_{i \in N} \sum_{j \in M} \alpha_{ij} x_{ij} & \text{s.t.} \nonumber \\
    \sum_{i \in N} x_{ij} \leq \sum_{i \in S} d_{ij} & \forall j \in M \nonumber \\
    \sum_{j \in M} x_{ij} \leq q_i & \forall i \in S \nonumber \\
    \sum_{j \in M} x_{ij} \leq 0 & \forall i \not\in S \nonumber \\
    x_{ij} \geq 0 & \forall i \in N,\ j \in M \nonumber 
    \end{eqnarray}

    but the inequalities $\sum_{j \in M} x_{ij} \leq 0$ for all $i \not\in S$ and $x_{ij} \geq 0$ for all $i \not\in S,\ j \in M$ imply $x_{ij} = 0$ for all $i \not\in S$ and $j \in M$. The problem is thus equivalent to

    \begin{eqnarray}
     \max \sum_{i \in S} \sum_{j \in M} \alpha_{ij} x_{ij} & \text{s.t.} \nonumber \\
    \sum_{i \in S} x_{ij} \leq \sum_{i \in S} d_{ij} & \forall j \in M \nonumber \\
    \sum_{j \in M} x_{ij} \leq q_i & \forall i \in S \nonumber \\
    x_{ij} \geq 0 & \forall i \in S,\ j \in M \nonumber 
    \end{eqnarray}

    and the thesis follows.
\qed  \end{proof} 
}

\ifshowproofs
 \myproofblockA
\fi 

 An {imputation} is a vector $x\in \mathbb{R}^n$ with $x(N)=v(N)$. The core $\mathcal{C}$ of a game $(N,v)$ is the set of all  imputations $x\in \setR^n$ that ensures that no coalition has an incentive to break the grand coalition $N$, i.e., $\mathcal{C}$ is the set of $x$ such that:
	
\begin{eqnarray}
x(N)=v(N) & \nonumber \\
x(S)\geq v(S) &  \forall S\subseteq N \nonumber
\end{eqnarray}

Owen~\citep{owen1975core} proved that linear production games have nonempty core by exhibiting a solution from an arbitrary optimal dual solution to the linear program that determines $v(N)$. In the context of production-distribution games, the result reads as follows.  Consider the dual (D) of the linear program (P) defining $v(N)$:

\bigskip\noindent
\footnotesize
\begin{tabular}{cc}
\begin{minipage}[t]{0.58\textwidth}
\begin{tabular}{cc}
($P$) & $
\begin{array}{lll}
&\max \sum_{i\in N}\sum_{j\in M}\alpha_{ij}y_{ij}  \\
&\\
&\sum_{i\in N}y_{ij} \leq \sum_{i\in N}d_{ij} \ \ \ j\in M \nonumber\\
&\sum_{j\in M}y_{ij} \leq q_i \ \ \ \ \ \ \ \ \  \ \ \ \ \ i\in N \nonumber\\
&y_{ij} \geq 0 \ \ \  \ \ \ \  \ \ \ \ \ \  \ i\in N, j\in M \nonumber
\end{array}
$\\
\end{tabular}
\end{minipage}&
\begin{minipage}[t]{0.48\textwidth}
\begin{tabular}{cc}
($D$) & $
\begin{array}{ll}
& \min \sum_{i\in N} q_i\omega_i +  \sum_{j\in M} \beta_j \sum_{i\in N} d_{ij} \\
&\\
& \omega_i + \beta_j \geq \alpha_{ij} \ \ \  \ \ \ \ i\in N, j\in M  \\
& \omega_i \geq 0 \ \ \ \ \ \ \ \ \ \ \ \ \ \ \ \  \ \ \ \ \ \ \ \ \ \ \ \ i\in N \nonumber\\
&\beta_j \geq 0 \ \ \ \ \ \ \ \ \ \ \ \ \ \ \ \ \ \ \ \ \ \ \ \ \ \ \ j\in M \nonumber
\end{array}
$\\
\end{tabular}
\end{minipage}\\
\end{tabular}\\

\normalsize
\begin{lemma}\label{dual} For any production-distribution game, the dual problem $(D)$ has always an optimal solution. Moreover, if $(\omega^*, \beta^*)$ is an optimal solution to $(D)$, then the imputation $\gamma_i : = q_i\omega^*_i + \sum_{j\in M}  \beta^*_jd_{ij}$, for $i\in N$, is in the core. In particular, when the production-distribution game is uncapacitated, we have $w_i^*=0$ for all $i\in N$ and $\beta^*_j=\max_{i\in N} \alpha_{ij}$ for all $j\in M$. 
\end{lemma}

Since the problem of efficiently identifying a core solution is solved for production-distribution games, we turn our attention to the computation of the nucleolus (see Section~\ref{definition_nuc} for the definition). 

Computing the nucleolus is often NP-hard. As, for totally balanced games, positive results are limited to a few notable families, such as minimum-cost spanning tree games~\citep{granot1984core}, assignment games~\citep{solymosi1994algorithm}, and convex games~\citep{faigle2001computation}. While linear production games may appear structurally more tractable, \citet{deng2009finding} demonstrated that computing the nucleolus in this class remains NP-hard. 
However, the reduction presented in \citep{deng2009finding} does not apply directly to the case of production-distribution games. We leave open the question of whether the problem remains hard in the capacitated case or in the general uncapacitated setting. In this work, we focus on the uncapacitated case with a small number of markets, including, of course, the single market case.
\begin{remark}

It is known that the nucleolus can be computed in polynomial time for the class of convex games~\citep{faigle2001computation}. A (profit) game $(N,v)$ is convex if $v(S)+v(T)\leq v(S\cap T)+v(S\cup T)$ for all coalitions $S$ and $T$. Unfortunately, production-distribution games do {\em not} fit into the framework of convex games, not even in the case of uncapacitated single market, as it is witnessed by the following example. 

\begin{example}
Consider the uncapacitated production-distribution game with 3 players and a single market with the following input: $r=2$, $c_1=1,c_2=1,c_3=2$, $d_1=d_2=d_3=1/3$. Consider the set $S=\{1,3\}$, $T=\{2,3\}$. We have $v(S)+v(T)=2/3+2/3=4/3>0+1=v(S\cap T)+v(S\cup T)$.
\end{example}

\end{remark}

\section{Uncapacitated production-distribution games}\label{updg}

 For uncapacitated production-distribution games, the capacity constraints $\sum_{j\in M} y_{ij} \leq q_i$, $i\in N$ are redundant.  We may assume in addtition, without loss of generality, that there is equality in the constraint $\sum_{i\in S}y_{ij} \leq \sum_{i\in S}d_{ij}$ in~\eqref{P2}. Indeed, if not, we could increase the value of some $y_{ij}$ so that the constraint holds tight and get a solution with objective value no smaller, since $\alpha_{ij}\geq 0$, for any $i\in N$ and $j\in M$. In the uncapacitated case, we may therefore re-write~\eqref{P2} as follows:

\begin{eqnarray}
v(S) = \max \sum_{i\in S}\sum_{j\in M} \alpha_{ij}y_{ij} & s.t. \label{P3}\\
\sum_{i\in S}y_{ij} = \sum_{i\in S}d_{ij} & \forall j\in M \nonumber \\
y_{ij} \geq 0 & \forall i\in S, j\in M \nonumber 
\end{eqnarray}

and it is therefore straightforward to see that:

\begin{equation}\label{vuncap}
v(S) = \sum_{j\in M} (\sum_{i\in S} d_{ij}) \cdot \max_{k\in S} \alpha_{kj}.  
\end{equation}

Note that when the core consists of a single point, the nucleolus (which is in the core when the core is nonempty) can be directly computed using Lemma~\ref{dual}. A well-known sufficient condition for a game $(N, v)$ to have a core that is nonempty and consists of a single point is the following: $v(A\cup B) = v(A) + v(B)$, for any $A, B\subset N: A \cap B = \emptyset$. Such a game is called {\em inessential}, meaning that forming a coalition is not beneficial.  

Interestingly, we can {\em characterize}, i.e., through conditions that are necessary and sufficient, the cases in which the core of the uncapacitated production-distribution game reduces to a single point, namely, the nucleolus. It turns out that these conditions, that are specified in the following lemma, are both relevant in practical settings, as competitive dynamics may naturally enforce them, and from a theoretical point of view since they hold also in games that are not inessential and where players have indeed the incentive to cooperate.

Lemma~\ref{lem:singleton} will provide this caracterization that builds upon the following remark (see the Appendix for details):

\begin{remark}\label{decomposition}
In the uncapacitated case, it is possible to associate to the multi market game $(N,v)$ a single market (sub-)game $(N, v_j)$ for each market $j\in M$, so that $(N, v)=(N,\sum_{j\in M} v_j)$. Namely, if we let $v_j(S) = (\sum_{i\in S} d_{ij} )\cdot  a^S_j$, for each $S\subseteq N$, with $a^S_j=\max_{k\in S} \alpha_{kj}$, then $v(S) = \sum_{j\in M}v_j(S)$. 
\end{remark}

\begin{lemma}\label{lem:singleton}
Let $(N,v)$ be an uncapacitated production-distribution game and  let $v_j(S) = (\sum_{i\in S} d_{ij} ) \cdot  \alpha^S_{j}$, for each $j\in M$ and each $S\subseteq N$. The following assertions are equivalent:
\begin{itemize}
\item[(i)] The core of $(N,v)$ is a singleton
\item[(ii)] For each $j\in M$, the core of $(N,v_j)$ is a singleton
\item[(iii)] For all $j\in M,i\in N,k\in N\setminus \{i\}$, we have $ d_{kj} \cdot(\alpha^N_j - \alpha^{N\setminus \{i\}}_j)=0$.
\end{itemize} 
\end{lemma}

\newcommand{\myproofblockB}{ 
 \ifnotshowproofs
  \subsection{Proof of Lemma~\ref{lem:singleton}}
 \fi
\begin{proof}
The Minkowski sum of the cores of the sub-games $(N, v_j)$, which are nonempty, is included in the core of the multi market game and hence we have $(i)\implies (ii)$.

Let us prove that $(ii)\implies (iii)$ by contraposition. Assume there exist $j\in M, i\in N$ and $k\in N\setminus \{i\}$ such that $ d_{kj} \cdot(\alpha^N_j - \alpha^{N\setminus \{i\}}_j)\neq 0$. Then  we have in particular $|\argmax_{k\in N} \alpha_{kj}|=1$ and $i$ is the unique maximizer,  otherwise $\alpha^N_j =\alpha^{N\setminus \{i\}}_j$. We also have $d_{kj}>0$, with $k\neq i$. Now observe that the point $\bar{x}$ such that $\bar{x}_l:=d_{lj}\alpha_{ij}$, for each $l\in N$, is in the core of $(N,v_j)$ by Lemma~\ref{dual}. For each $l\in N$, define:
$$
\tilde{x}_l=\left\lbrace
\begin{array}{ll}
\bar{x}_l +\epsilon & \mbox{if } l=i,\\
\bar{x}_l-\epsilon & \mbox{if }l={k},\\
\bar{x}_l & \mbox{otherwise.}
\end{array}
\right.
$$
with $\epsilon=min_{S:i\not\in S,k\in S} (\bar{x}(S) - v_j(S))$. Observe that $\epsilon>0$ as for $S$ with $i\not\in S$ and $k\in S$, we have $\bar{x}(S)=(\sum_{l\in S} d_{lj})\cdot \alpha_{ij} > (\sum_{l\in S} d_{lj})\cdot \max_{l\in S} \alpha_{lj} =v_j(S)$ as $\sum_{l\in S} d_{lj} \geq d_{kj}>0$. 

$\tilde{x}$ is an imputation as $\tilde{x}(N)=\bar{x}(N)=v(N)$. Now, for $S$ such that $i\in S$ or $k\not\in S$, we have $\tilde{x}(S)\geq \bar{x}(S)\geq v_j(S)$. For $S$ such that $i\not\in S$ and $k\in S$ we have $\tilde{x}(S)=\bar{x}(S)-\epsilon\geq v_j(S)$ and thus $\tilde{x}\neq \bar{x}$ is in the core of $(N,v_j)$. Hence the core of $(N,v_j)$ is not a singleton.

We now prove that $(iii)\implies (i)$. For each $i\in N$, we have $$v(N\setminus\{i\})+\sum_{j\in M}  d_{ij} \alpha^N_j =\sum_{j\in M} \sum_{k\in N\setminus \{i\}}  d_{kj} \alpha_j^{N\setminus \{i\}} +\sum_{j\in M}  d_{ij} \alpha^N_j =\sum_{j\in M} \sum_{k\in N\setminus \{i\}}  d_{kj} \alpha_j^{N} +\sum_{j\in M} d_{ij} \alpha^N_j =v(N)$$.

It follows that any point $x$ in the core of $(N,v)$ must satisfy the following constraints:

\begin{eqnarray}
\label{core=1}
x(N)&=&v(N)\\
\label{core-ineq1}
x(N\setminus\{i\})\geq v(N\setminus \{i\})&= &v(N) - \sum_{j\in M} d_{ij} \alpha^N_j, \ \  i\in N\\
\end{eqnarray}
Summing up inequalities (\ref{core-ineq1}) we get:

$$(n-1) x(N) \geq n \cdot v(N) - v(N) = (n-1) v(N).$$


Therefore, any solution in the core must satisfy each inequality~\eqref{core-ineq1} tight. As the system induced by these inequalities has full rank, it follows that it has unique solution and therefore the core, which is nonempty, has a single point.

\qed  \end{proof} 

}

\ifshowproofs 
 \myproofblockB
\fi

 It might be tempting to try to exploit Remark~\ref{decomposition} further to decompose the computation of the nucleolus. Unfortunately, the nucleolus is not additive in general (in contrast with other solution concepts, e.g. the Shapley value~\citep{shapley1953value}), not even for uncapacited production-distribution games, as it is shown by the following example.

\begin{example}\label{theexample}
Consider an uncapacitated game with 3 producers and 3 markets. The demand matrix $d:=(d_{ij})_{1\leq i,j\leq 3}$ and the profit matrix $\alpha:=(\alpha_{ij})_{1\leq i,j\leq 3}$, are given below: \\

$
d=\left[\begin{array}{ccc}
 1 & 0 & 1\\
 0 & 1 & 1\\
 1 & 1 & 0\\
\end{array}\right]
$
, 
$
\alpha=\left[\begin{array}{ccc}
 1 & 1 & 1 \\
1 & 0 & 0 \\
0 & 0 & 1 \\
\end{array}\right]
$\\

Computation shows that the nucleolus of the associated production-distribution game  is $x_1=10/3,x_2=4/3,x_3=4/3$. In contrast, the sum of the nucleolus associated with the 3 production-distribution games restricted to each market is  $x_1=3,x_2=1.5,x_3=1.5$. 
\end{example}

\section{The nucleolus and the Maschler scheme}\label{definition_nuc}

For each $x\in\setR^n$ such that $x(N)=v(N)$ and for each $S \in 2^N$, we define the excess of coalition $S$ at $x$ as: $e_S(x) = x(S)- v(S)$. We also define the vector $e(x) \in \setR^{2^n}$ of the excess of the different coalitions at $x$ as $e(x) = (e_S(x))_{S\in 2^N}$. We say that $e(x) \succ e(y)$ if $\overrightarrow{e(x)}$ is lexicographically superior to $\overrightarrow{e(y)}$, where for a vector $x$, $\overrightarrow{x}$ is a permutation of the entries of $x$ arranged in non-decreasing order. The (pre)nucleolus is the set of imputations that lexicographically maximizes the excess vector. It is known to be a singleton~\citep{schmeidler1969nucleolus}. We often abuse notations and identify the (pre)nucleolus with the unique element in the set. Note that, in principle, the definition of the nucleolus imposes individual rationality but this is redundant when the core is nonempty~\citep{peleg2007introduction}, as in production-distribution games. In this case, we have: 

\begin{definition}\label{nucleolus}: The nucleolus of a cooperative game $(N, v)$ with nonempty core is $nc(N, v) := \{x\in\setR^n : x(N)=v(N)$ and $\nexists y \in\setR^n : y(N)=v(N) , e(y) \succ e(x)\}$.\end{definition}

One of the main computational technique to compute the nucleolus is known as Maschler scheme~\citep{maschler1979geometric}~: it reduces the problem to solving a sequence of linear programs. There are different variants of the Maschler scheme. We present here a somewhat generic version that will be instantiated in two different ways in the following sections.

\smallskip
The Maschler scheme computes the nucleolus $\nu$ through a sequence of linear programs. In each iteration, it fixes the total payoff for certain coalitions $S$ at the value $\nu(S)$—as determined in previous steps from optimality conditions—even though the individual components $\nu_i$ for $i \in S$ may still be unknown. It then constructs a new linear program aimed at maximizing the smallest excess across all remaining, unfixed coalitions. This procedure continues until the constraints define a unique solution: the nucleolus.




More formally, we solve iteratively, for $k\geq 1$, the following problem, until the optimal solution set is made of a singleton. 

\begin{align}
&\mbox{max } \epsilon \nonumber \\
(M^k) \ \ \ \ & x(S)\geq v(S) + \epsilon \ \ \  \forall S\in 2^N \setminus \mathcal{F}^k \label{coreconsk} \\
&x(S) = c^k(S) \ \ \ \ \ \  \forall S\in\mathcal{F}^k. \nonumber
\end{align}


By induction problem $(M^k)$ is bounded and feasible and we let $\epsilon^k$ be the optimal value. $\mathcal{F}^k$ represents the set of coalitions $S$ for which we fix the value of $x(S)$ to $c^k(S)$ for some value $c^k(S)$, with $x(S)=c^k(S)$ being satisfied by the nucleolus. Initially
 $\mathcal{F}^1=\{N,\emptyset\}$ and $c^1(N)=v(N), \ c^1(\emptyset)=0$. Then, for $k\geq 2$, we define $\mathcal{F}^{k} $ by adding some coalitions $S$ in $2^N \setminus \mathcal{F}^{k-1}$  with $y(S)=z(S)=:\rho(S)$ for any two optimal solution $(y,\epsilon^{k-1})$,  $(z,\epsilon^{k-1})$ to $(M^{k-1})$.  The nucleolus $\nu$ being on the optimal face, it satisfies $\nu(S)=\rho(S)$. We define $c^k(S)=c^{k-1}(S)$ for $k\geq 2$ and $S\in \mathcal{F}^{k-1} \cap \mathcal{F}^{k}$ and $c^k(S)= \rho(S)$ for $S\in \mathcal{F}^{k} \setminus \mathcal{F}^{k-1}$. The  constraints $x(S) = c^k(S)$ for $S\in \mathcal{F}^k$ are satisfied by the nucleolus by induction.
 
The original scheme considered by Maschler is when $\mathcal{F}^{k}=\mathcal{F}^{k-1}\cup \mathcal{S}^{k-1}
$, where $\mathcal{S}^{k-1}$ coincides with the set of all coalitions $S$ such that $y(S)=z(S)$ for any two optimal solutions $(y,\epsilon^{k-1}),(y,\epsilon^{k-1})$ to ($M^{k-1}$). In this case, it can be proven that the dimension of the polytope associated with the solution set to ($M^k$) decreases at each iteration and thus that the number of iterations is bounded by $n$~\citep{maschler1979geometric}.

 Note that we could also in principle remove some coalitions from $\mathcal{F}^{k-1}$ in $\mathcal{F}^{k}$  (except for $N$) in case we want to have specific structure for $\mathcal{F}^{k}$. We could also add some constraints that are implicit (linear combinations of others). What matters is that we only set constraints of the form $x(S) = c^k(S)$ for $S\in \mathcal{F}^k$ that are satisfied by the nucleolus and that problem $(M^k)$ is still bounded (this is the reason why $N$ must be kept). When such variants are considered, one needs to specify $\mathcal{F}^{k}$, to prove convergence, and to evaluate the number of iterations explicitly. Note that if this generic procedure converges, it converges by construction to the nucleolus.

\section{Separation algorithms for the nucleolus of the uncapacitated production-distribution game}\label{singleuncap}

In this section we provide a  separation algorithm to solve each iteration of the Maschler scheme  and find the nucleolus in polynomial time when the number of markets is fixed. We begin with the case of a single market and then extend the approach to the general setting. We consider an instantiation of the Maschler scheme presented in Section~\ref{definition_nuc}  where $\mathcal{F}^k=\{S: S\in span\{S_1,...,S_{k}\}\}$ for some $S_1,...,S_{k} \subseteq N$ that are linearly independent (we are abusing notations here and in the following as we identify sets with their characteristic vectors). In this case, the set of constraints $x(S) = c^k(S)$ for all $S\in \mathcal{F}^k$, can be reduced to $x(S) = c^k(S)$, $\forall S: S \in \{S_1,...,S_{k}\}$ as all other equalities are dominated (remember the system is always feasible by construction).  

At each iteration $k\geq 1$, we thus require the solution of a linear program of the kind:
\begin{align}
&\mbox{max } \epsilon \nonumber \\
(M^k) \ \ \ \ & x(S)\geq v(S) + \epsilon \ \ \  \forall S: S \not\in span\{S_1,...,S_{k}\} \label{reveq1}\\
&x(S) = c^k(S) \ \ \ \ \ \  \forall S: S \in \{S_1,...,S_{k}\} \label{reveq2}
\end{align}
where, once again each $c^k(S)$ in~\eqref{reveq2} is suitably defined through the iterations. At the first iteration, $S_1=N$ and $c^1(N)=v(N)$. Then the set $S_{k+1}$ that is added to $\{S_1,...,S_{k}\}$ from iteration $k$ to iteration $k+1$ corresponds to a single constraint in (\ref{reveq1}) with positive value in an optimal dual solution to $(M^k)$. Such a set exists as an optimal dual solution $\mu$ to $(M^k)$, with $\mu_S\geq 0$ for $S\not\in span\{S_1,...,S_{k}\}$, must contain at least one positive dual variable over the sets in~\eqref{reveq1} as, in the dual, there is a constraint of the form: $\sum_{S: S \not\in span\{S_1,...,S_{k}\}} \mu_S =1$. 

By induction, at each iteration, $\{S_1,...,S_{k}\}$ is made of linearly independent sets and so $k\leq n$. The corresponding
 scheme will thus converge in at most $n$ iterations. Moreover, we can state:

\begin{lemma}\label{polynomial}
If each linear program $(M^k)$ can be solved in polynomial time and still in polynomial time a dual certificate associated with facets of (\ref{reveq1})-(\ref{reveq2}) can be found, then the nucleolus can be found in polynomial time.
\end{lemma}

\subsection{Single market}\label{newsepalg}

In the single market game, we simplify notations and assume the following. The commodity product is sold at market price $r$ and the (per unit) production cost for company $i$ is $c_i$. Each company $i$ owns a demand $d_i$. We let $\alpha_i:= r-c_i$ and we assume here without loss of generality that $\alpha_1\geq \alpha_2 \geq ... \geq \alpha_n \geq 0$. The value $v(S)$ of a coalition $S\subseteq N$ in the single market game is therefore equal to:

\begin{equation}\label{sinvuncap}
v(S) = (\sum_{i\in S} d_i) \cdot \max_{k\in S}\alpha_k.
\end{equation}

 We now show how to separate any point from the polytope $(M^k)$ in polynomial time for the single market uncapacitated production-distribution game. This allows to exploit Lemma~\ref{polynomial} to show that the nucleolus can be computed in polynomial time.

Let us introduce a couple of notations we will use in the following. For a set $S\subseteq E$ and $e\in E$, we denote by $S-e$ the set  $S\setminus\{e\}$ and by $S+e$ the set $S\cup \{e\}$.  A set $\{a,...,b\}$, with $a, b\in \{1,..,n\}$ and $b<a$, is the empty set (not the set $\{b,...,a\}$).

\begin{lemma}\label{lem:sepk}
There exists a polynomial time algorithm for separating over the polytope $(M^k)$ for the single market uncapacitated production-distribution game.
\end{lemma}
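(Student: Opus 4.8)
The plan is to build a polynomial separation oracle for the polytope (\ref{eq1})--(\ref{eq2}) that, given a candidate point $(x^*,\epsilon^*)$, either certifies feasibility or returns a violated constraint. I would split the constraints into the equalities (\ref{eq2}) and the inequalities (\ref{eq1}) and treat them separately, since they behave very differently: the former cut out an affine subspace and the latter encode the game through $v(S)$.

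First I would dispatch the equalities. Set $W=\mathrm{span}\{\chi^{S_1},\dots,\chi^{S_{k'}}\}$. Because $c^k$ is linear on $W$, every equality $x(S)=c^k(S)$ with $\chi^S\in W$ is implied by the $k'\le n$ equalities on the generators, so it suffices to test $x^*(S_i)=c^k(S_i)$ for $i=1,\dots,k'$, which costs $O(n^2)$ arithmetic operations. If some generator fails, the hyperplane $\chi^{S_i}$, taken with the sign of $x^*(S_i)-c^k(S_i)$, separates $(x^*,\epsilon^*)$; otherwise $x^*(S)=c^k(S)$ holds for every $\chi^S\in W$ and no equality of (\ref{eq2}) is violated.

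Second, and this is the crux, I would separate the inequalities, i.e. decide whether $x^*(S)\ge v(S)+\epsilon^*$ for every $S$ with $\chi^S\notin W$. This is precisely the minimum excess coalition problem restricted to sets outside $W$: compute $\mu=\min\{x^*(S)-v(S):\chi^S\notin W\}$ and compare it with $\epsilon^*$. The \emph{unrestricted} minimum excess is already polynomial for the production-distribution game, because $v(S)$ is the optimum of the transportation LP (\ref{P3}); replacing it by its linear-programming dual turns $x^*(S)-v(S)$ into a single minimization, and enumerating the most profitable player present in $S$ reduces the task to the family of shortest-path computations on acyclic graphs recalled in Section~\ref{dpsep}. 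The genuine obstacle is the side condition $\chi^S\notin W$: the sets fixed in earlier iterations lie in $W$ and are governed by the equalities (\ref{eq2}) rather than by the inequalities (\ref{eq1}), so they must be excluded from the inequality search, as an unrestricted minimum-excess computation could otherwise return such a set and report a spurious violation.

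To enforce $\chi^S\notin W$ I would fix an \emph{integral} basis $b^{(1)},\dots,b^{(n-k')}$ of $W^{\perp}$ (one exists since $W$ is spanned by $0/1$ vectors). As $\chi^S\in\{0,1\}^n$, each $b^{(r)}\cdot\chi^S$ is an integer, so $\chi^S\notin W$ holds iff $|b^{(r)}\cdot\chi^S|\ge 1$ for some $r$. It therefore suffices to run, for each $r$ and each sign, the minimum-excess computation augmented with the single linear side constraint $\pm\,b^{(r)}\cdot\chi^S\ge 1$, and to keep the best of these $2(n-k')$ values as $\mu$: if $\mu<\epsilon^*$ the corresponding set lies outside $W$ and its inequality is violated, while if $\mu\ge\epsilon^*$ then, together with the verified equalities, $(x^*,\epsilon^*)$ satisfies every constraint of (\ref{eq1})--(\ref{eq2}) and is feasible. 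The main difficulty I anticipate is keeping this augmented problem polynomial, since one linear side constraint makes each shortest-path subproblem resource-constrained; I would resolve it either by exploiting the nested/interval structure of the fixed sets $S_i$ produced by the scheme (which keeps the side constraint combinatorially simple) or by a parametric/Lagrangian search over the multiplier of $b^{(r)}\cdot\chi^S$, using that the inner problem remains a shortest-path problem for every fixed multiplier. Establishing one of these two routes rigorously is where the real work lies; everything else is linear algebra together with the shortest-path reduction of Section~\ref{dpsep}.
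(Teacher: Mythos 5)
Your reduction has a genuine gap at exactly the point you flag as ``where the real work lies,'' and neither of your two proposed escape routes closes it. Enforcing $\chi^S\notin W$ via an integral basis of $W^{\perp}$ is logically correct, but it converts each subproblem into: minimize the modular weight $y^*(S)$ over $S\subseteq\{j,\dots,n\}$ with $j\in S$ subject to one linear side constraint $\pm\,b^{(r)}\cdot\chi^S\geq 1$ with arbitrary (possibly large) integer coefficients. That is a knapsack-type constrained minimization, NP-hard in general, so the reduction by itself proves nothing. Your Lagrangian/parametric route computes the Lagrangian dual, but for $0/1$ problems with a single hard constraint there is a duality gap: the optimizer at the best multiplier need not satisfy, let alone be optimal for, the side constraint, so this does not yield an exact separation oracle. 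Your other route presumes the fixed sets $S_1,\dots,S_{k'}$ have nested or interval structure, but nothing in the scheme guarantees this: the sets added at each Maschler iteration come from a dual certificate of the previous LP (see the Corollary following the lemma), and the lemma is stated, and must be proved, for an \emph{arbitrary} family of linearly independent sets.

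The paper avoids constrained optimization altogether through a direct exchange argument that exploits the specific form $v(S)=\alpha_i\lambda(S)$ (with $i=\min S$). After the same enumeration over the minimum index $i$ that you use, the problem becomes: given sorted weights $a_1\leq\dots\leq a_p$, decide whether $a(S)\geq b$ for all $S$ with $S\cup M\notin\mathrm{span}\{S_1,\dots,S_k\}$. The \emph{unconstrained} minimizer is the prefix $\bar{S}=\{1,\dots,\bar{j}\}$ of nonpositive weights; if it is not violated, or violated but outside the span, you are done immediately. In the remaining case the paper scans single-element deletions $\bar{S}-l'$ (for $l'\leq\bar{j}$) and additions $\bar{S}+l'$ (for $l'>\bar{j}$), defining thresholds $\underline{l}$ and $\bar{l}$ as in (\ref{eq:l})--(\ref{eq:u}): either one of the two explicit candidates $\bar{S}-(\underline{l}-1)$ or $\bar{S}+(\bar{l}+1)$ is a separating inequality, or else every singleton $\{l\}$ with $\underline{l}\leq l\leq\bar{l}$ lies in the span (as a difference of two sets in $\mathcal{F}$), which forces \emph{every} violated set --- necessarily of the form $\{1,\dots,\underline{l}-1\}\cup\tilde{S}$ with $\tilde{S}\subseteq\{\underline{l},\dots,\bar{l}\}$ --- into the span, certifying that no separating inequality exists. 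The sortedness of the $a_l$ is what makes checking just two boundary exchanges sufficient, and the closure of the span under these exchanges is what lets ``in the span'' propagate to all remaining violated sets; this is the combinatorial idea your proposal is missing, and it is what keeps the whole oracle at a few sorted scans per index $i$ rather than an NP-hard subproblem per basis vector.
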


\newcommand{\myproofblockC}{
 \ifnotshowproofs
  \subsection{Proof of Lemma~\ref{lem:sepk}}
 \fi 
\begin{proof}
Let $(\bar{x},\bar{\epsilon})$ be a point in $\setR^{n+1}$ such that $\bar{x}(N)=v(N)$ and $\bar{x}(S)=c^k(S)$ for all $S\in \{S_1,...,S_{k}\}$.  We want to decide whether  $\bar{x}$ satisfies inequalities (\ref{reveq1}) and in case not, find a separating inequality.

The question reduces to testing whether there exists $S\subseteq \{1,...,n\}$, with $S\not\in span\{S_1,...,S_{k}\}$, such that $\bar{x}(S) < v(S) + \bar{\epsilon}$. This is equivalent to checking for each $i=1,...,n$ and all $S$ s.t. $i\in S$, $S\cap\{1,...,i-1\}=\emptyset$ and $S\not\in span\{S_1,...,S_{k}\}$ whether  $\bar{x}(S) \geq \demand(S)\cdot \alpha_i + \bar{\epsilon}$. Setting $\bar{y}^{i}_j:= \bar{x}_j -  \demand_j\cdot \alpha_i$, for any $j\in \{i,...,n\}$, this is equivalent to testing whether  $\bar{y}^{i}(S) \geq \bar{\epsilon}$ or equivalently whether $\bar{y}^{i}(S-i)  \geq \bar{\epsilon} - \bar{y}^{i}(\{i\})$. We state this question in a slightly more abstract form that will be used also in the multi market case, see Section~\ref{umm}. Namely, for each $i<n$ (for $i=n$ the problem is trivial), the problem is of the following form: 

\begin{question}\label{enumsep}
Consider a set $R$ with $|R|=p$ for some $p\in \mathbb{N}\setminus \{0\}$, a set $Q$ with $Q\cap R=\emptyset$ and linearly independent sets $S_1,...,S_k$ over a ground set of cardinality $n$ containing $Q\cup R$ for some $0 \leq k\leq n$. Let $a\in \mathbb{R}^p$ and $b$ in $\setR$, and assume w.l.o.g. that the elements in $R$ are numbered $\{1,...,p\}$ with $a_1 \leq ... \leq a_p$. Check whether $a(S) \geq b$ for all $S \subseteq R:  S\cup Q \not\in span\{S_1,...,S_{k}\}$ and identify $S$ with $a(S)<b$ if not. 
\end{question}

It is indeed enough to set:  $R=N\setminus \{1,...,i\}$ ; $p=n-i$ ; $Q=\{i\}$ ; $a=\overrightarrow{y_{|R}}$ ; $b=\bar{\epsilon} - \bar{y}^{i}(\{i\})$ (recall that for a vector $x$, $\overrightarrow{x}$ is a permutation of the entries of $x$ arranged in non-decreasing order). 

We now show how to solve Question~\ref{enumsep}. From now on, we let $\mathcal{F}:=\{S \subseteq \{1,...,p\}: S\cup Q \in span\{S_1,...,S_{k}\}\}$. Also we say that the inequality associated with $S\subseteq \{1,...,p\}$ is {\em violated} if $a(S) < b$ (independently of whether $S$ is in $\mathcal{F}$ or not). In contrast, we talk of a {\em separating inequality} for a violated inequality associated with a set $S$ not in $\mathcal{F}$.

We define $\bar{j}:=\max\{j:a_j\leq 0\}$, with the convention that $\bar{j}=0$ if $a_1> 0$, and $\bar{S}:=  \{1,...,\bar{j}\}$ and we assume also that $a(\bar{S}) < b$ and $\bar{S}\in \mathcal{F}$ otherwise either there is no violated inequality -- the one associated with $\bar{S}$ is the most (possibly) violated one as $\bar{S}$ minimizes $a(S)$ over all $S\subseteq \{1,...,p\}$ -- or  the inequality associated with $\bar{S}$ is a separating inequality.

\vspace{1ex}

\noindent If $\bar{j}>0$ (i.e., $a_1 \leq 0$) we let:
\begin{equation} \underline{l}:=\min\{l \in \{1,..., \bar{j}\}: \bar{S} - l' \in \mathcal{F} \mbox{ and } a(\bar{S}-l')<b, \forall l' : l\leq l' \leq \bar{j}\} 
\label{eq:l}
\end{equation}

\noindent and analogously if $\bar{j}<p$ (i.e., $a_p > 0$) we let:
\begin{equation} \bar{l}:=\max\{l \in \{\bar{j}+1,...,p\}: \bar{S} + l' \in \mathcal{F} \mbox{ and } a(\bar{S}+l')<b, \forall l' : l\geq l' \geq \bar{j}+1\}
\label{eq:u}
\end{equation}

\vspace{1ex}

with respectively the convention that $\underline{l}=\bar{j}+1$ if there is no $l\in \{1,..., \bar{j}\}$ with the  property required in (\ref{eq:l}) and that $\bar{l}=\bar{j}$ if there is no $l\in \{\bar{j}+1,...,p\}$ with the  property required in (\ref{eq:u}). 

Now suppose that $\underline{l}> 1$ (we are therefore assuming that $\bar{j}>0$). By definition, $\bar{S} - (\underline{l}-1) \not\in \mathcal{F}$ or $a(\bar{S}-(\underline{l}-1))\geq b$. So, if in particular $a(\bar{S} -  (\underline{l}-1))< b$, then $\bar{S} - (\underline{l}-1) \not\in \mathcal{F}$ and the inequality associated with $\bar{S} -  (\underline{l}-1)$ is a separating inequality. Analogously, suppose that $\bar{l}<p$ (we are therefore assuming that $\bar{j}<p$). By definition, $\bar{S} + (\bar{l}+1) \not\in \mathcal{F}$ or $a(\bar{S}+(\bar{l}+1))\geq b$. So, if in particular $a(\bar{S}+ (\bar{l}+1))< b$, then $\bar{S} + (\bar{l}+1) \not\in \mathcal{F}$ and the inequality associated with $\bar{S}+ (\bar{l}+1)$ is a separating inequality. 

We claim that if we are not in one of the two situations above, there is no separating inequality. Before going into details, observe that we are in (exactly) one of the following three cases:

\begin{enumerate}
    \item $0<\bar{j}<p$; either $\underline{l}=1$ or $a(\bar{S} - (\underline{l}-1))\geq b$; either $\bar l = p$ or $a(\bar{S}+ (\bar{l}+1))\geq b$;
    \item $\bar{j}=p$ (hence $\underline{l}$ is defined while $\bar{l}$ is not); either $\underline{l}=1$ or $a(\bar{S} - (\underline{l}-1))\geq b$;
    \item $\bar{j}=0$ (hence $\bar{l}$ is defined while $\underline{l}$ is not);  either $\bar l = p$ ir $a(\bar{S}+ (\bar{l}+1))\geq b$.
\end{enumerate}

Now we first claim that:
\begin{itemize}
    \item[(i)] for cases 1 and 2, any violated inequality must include all $l$ in $\{1,...,\underline{l}-1\}$;
    \item[(ii)] for cases 1 and 3, any violated inequality must miss all $l$ in $\{\bar{l}+1,...,p\}$.
\end{itemize}
Note that $(i)$ is trivial if $\underline{l}=1$; analogously $(ii)$ is trivial if $\bar{l} = p$. So from now we assume that $\underline{l}> 1$ and $\bar{l}< p$. 
Now observe that a most possibly violated inequality missing (at least) an element $l\in \{1,...,\underline{l}-1\}$ is the inequality associated with $\bar{S}-(\underline{l}-1)$ and the most possibly violated inequality including (at least) an element $l\in \{\bar{l}+1,...p\}$ is the inequality associated with $\bar{S}+ (\bar{l}+1)$, which are both not violated. Hence our claim holds.

Suppose now that case 1 holds (the other cases go along the same lines). Any violated inequality, if any, is  associated with a set of the form $\{1,...,\underline{l}-1\} \cup \tilde{S}$ for some $\tilde{S} \subseteq \{\underline{l},...,\bar{l}\}$. As we now show, each $l\in \{\underline{l},...,\bar{l}\}$ is such that  $\{l\} \in span\{S_1,...,S_{k}\}$: by construction, for each $l:\underline{l} \leq l \leq \bar{j}$, $\bar{S} - l \in \mathcal{F}$ and $\bar{S}\in \mathcal{F}$, so $\{l\} = \bar{S}\setminus (\bar{S} - l)=(\bar{S}\cup Q) \setminus ((\bar{S} - l)\cup Q)$ is in $span\{S_1,...,S_{k}\}$ too; analogously,  by construction, for each $l: \bar{l} \geq l \geq \bar{j}+1$, $\bar{S}+l \in \mathcal{F}$ and $\bar{S}\in \mathcal{F}$, so $\{l\} = (\bar{S}+l) \setminus \bar{S} = ((\bar{S}+l) \cup Q)\setminus (\bar{S}\cup Q)$ is in $span\{S_1,...,S_{k}\}$ too. Since each $l\in \{\underline{l},...,\bar{l}\}$ is such that  $\{l\} \in span\{S_1,...,S_{k}\}$, it follows that $(\bar{S}\setminus (\{\underline{l},...,\bar{j}\}\setminus \tilde{S})) \cup (\tilde{S} \cap  \{\bar{j}+1,..., \bar{l}\})= \{1,...,\underline{l}-1\} \cup \tilde{S}$ is in $\mathcal{F}$.
\qed  \end{proof} 

}

\ifshowproofs 
 \myproofblockC
\fi

\begin{corollary}\label{cor:poly}
The nucleolus can be found in polynomial time for the single market uncapacitated production-distribution game.
\end{corollary}

\newcommand{\myproofblockD}{ 
 \ifnotshowproofs
   \subsection{Proof of Corollary~\ref{cor:poly}}
 \fi 
\begin{proof}
We build upon Lemma~\ref{polynomial}. We can solve each linear program $(M^k)$ in polynomial time from the equivalence between optimization and separation. Besides, we can obtain in polynomial time a dual certificate associated with facets of (\ref{reveq1})-(\ref{reveq2}) (see from Corollary 14.11.g in~\citep{schrijver1998theory}). 
\qed  \end{proof} 
}
\ifshowproofs 
 \myproofblockD
\fi

\subsection{Fixed number of markets}\label{umm}

We now deal with the multi market case, i.e., $|M|\geq 1$. Each unit of the commodity is sold at a common price $r_j$ in each market $j\in M$, the (per unit) production cost of company $i$ in market $j$ is $c_{ij}$, and therefore $\alpha_{ij}=r_j-c_{ij}$ is the (per unit) profit of company $i$ in market $j$. Moreover, each company $i$ owns a demand $d_{ij}\geq 0$ in market $j$ and so the value $v(S)$ of a coalition $S\subseteq N$ is equal to (see~\eqref{vuncap}):

$$v(S) = \sum_{j\in M} (\sum_{i\in S} d_{ij}) \cdot \max_{k\in S}\alpha_{kj}.$$

While we believe that Remark~\ref{decomposition} suggests that the sum of the nucleolus of the single market subproblems might be a good solution concept for this multi market case (as it seems reasonable to argue on each market separately, see also Section~\ref{conclusion}), Example~\ref{theexample} shows that the nucleolus is {\em not} additive in general. We will however show that the nucleolus can be computed in polynomial time, in the multi market case, when $|M|$ is fixed.

 We in particular show how to separate any point from the polytope $(M^k)$, defined by (\ref{reveq1}) and (\ref{reveq2}), in polynomial time for the multi market uncapacitated production-distribution game with a fixed number of markets. This allows to exploit again Lemma~\ref{polynomial} and Corollary~\ref{cor:poly} to show that the nucleolus can be computed in polynomial time in this case.

\begin{lemma}\label{lem:sepfixed}
There exists a polynomial time algorithm for separating over the polytope $(M^k)$ for the multi market uncapacitated production-distribution game with a fixed number of markets.
\end{lemma}

\newcommand{\myproofblockE}{ 
 \ifnotshowproofs
   \subsection{Proof of Lemma~\ref{lem:sepfixed}}
 \fi 
\begin{proof}
     The separation problem is : given a point $(\bar{x},\bar{\epsilon})$   in $\setR^{n+1}$ such that $\bar{x}(N)=v(N)$ and $\bar{x}(S)=c^k(S)$ for all $S\in \{S_1,...,S_{k}\}$; decide whether  $\bar{x}$ satisfies inequalities (\ref{reveq1}) and in case not, find a separating inequality.
 We can apply a similar rationale as for the single market case to decompose into simpler problems. 
 
 We may associate with each (nonempty) $S$ with $S\not\in span\{S_1,...,S_{k}\}$ an \emph{index vector} $(i_1,...,i_m)$ such that $i_j\in \argmax_{i\in S} \alpha_{ij}$. There are of course at most $n^{m}$ possible index vectors. Now, for every $j\in M$ and $l\in N$, define the set $\mathcal{S}^l_j:=\{i\in N:\alpha_{ij} \leq \alpha_{lj}\}$. Then, for each candidate index vector $(i_1,...,i_m)$, we consider the sets $S$ that are consistent with it, meaning $S\not\in span\{S_1,...,S_{k}\}$, $\{i_1,...,i_m\}\subseteq S$ and $S\subseteq S^{i_j}_j$ for all $j\in M$. The separation problem thus decomposes  into  $n^m$ problems, one for each candidate index vector $(i_1,...,i_m)$,  of the following form:\\

\noindent {\em 
$\exists ? \ S\not\in span\{S_1,...,S_{k}\}$ s.t. $\{{i}_1, ...,{i}_m\}\subseteq S\subseteq \cap_{j\in M}  \mathcal{S}^{i_j}_j$ with $\bar{x}(S) <\sum_{j\in M} \alpha_{i_j,j} \cdot \sum_{i\in S} d_{ij} + \bar{\epsilon}$.}\\

 \noindent Note that some of the problems might be unfeasible (when  there exists $j$ such that $i_j\not \in \cap_{j\in M} \mathcal{S}^{i_j}_j$ ) and we simply discard those.  The feasible problems are of the form: \\
 
 \noindent{\em  $\exists ? S: \ S\cup\{i_1,...,i_m\} \not\in span\{S_1,...,S_{k}\}$ s.t. $S\subseteq \cap_{j\in M} \mathcal{S}^{i_j}_j \setminus \{{i}_1, ...,{i}_m\}$ with 
 $\sum_{i\in S} \bar{y}_i < \bar{\epsilon} -\sum_{j\in [m]} \bar{y}_{i_j}$ }\\
 
 \noindent where $\bar{y}_i=\bar{x}_i - \sum_{j\in M} \alpha_{i_j,j} \ d_{ij}$.  It follows that the question of recognizing whether $\bar{x}(S) < v(S) + \bar{\epsilon}$, for a set $S\not\in span\{S_1,...,S_{k}\}$, can be handled again by the more general question discussed in Section~\ref{newsepalg} and recalled in the following:

\vspace{2ex}
\noindent {\em Question 1.}\ 
\ Consider a set $R$ with $|R|=p$ for some $p\in \mathbb{N}\setminus \{0\}$, a set $Q$ with $Q\cap R=\emptyset$ and linearly independent sets $S_1,...,S_k$ over a ground set of cardinality $n$ containing $Q\cup R$ for some $0 \leq k\leq n$. Let $a\in \mathbb{R}^p$ and $b$ in $\setR$, and assume w.l.o.g. that the elements in $R$ are numbered $\{1,...,p\}$ with $a_1 \leq ... \leq a_p$. Check whether $a(S) \geq b$ for all $S \subseteq R:  S\cup Q \not\in span\{S_1,...,S_{k}\}$ and identify $S$ with $a(S)<b$ if not. 
\vspace{2ex}

It is indeed enough to set for each $(i_1,...,i_m)$ such that $\{i_1,...,i_m\} \subsetneq \cap_{j\in M} \mathcal{S}^{i_j}_j  $ (the case where $\{i_1,...,i_m\} = \cap_{j\in M} \mathcal{S}^{i_j}_j  $ is trivial) : $Q=\{i_1,...,i_m\}$; $R=\cap_{j\in M} \mathcal{S}^{i_j}_j \setminus Q$ ; $p=|R|$; $a=\overrightarrow{ \bar{y}_{|R}}$, ; $b=\bar{\epsilon} -\sum_{j\in [m]} \bar{y}_{i_j}$. Since we have shown that Question~\ref{enumsep} can be solved in polynomial time, it follows that a separating inequality $\bar{x}(S) < v(S) + \bar{\epsilon}$ (if any) associated with a set $S\not\in span\{S_1,...,S_{k}\}$ can be found in polynomial time when $|M|=m$ is fixed by enumeration over all possible $(i_1,...,i_m)$. 
\qed  \end{proof} 
}

\ifshowproofs
 \myproofblockE
\fi

\begin{corollary}
The nucleolus can be found in polynomial time for the multi market uncapacitated production-distribution game with a fixed number of markets.
\end{corollary}

\subsection*{Remark}

The disjunctions we propose in this section can also be used to formulate the separation problem as a dynamic program and to exploit the framework of K\"onnemann and Toth~\citep{konemann2020general} (see Appendix~\ref{dpsep}). However, this approach yields less efficient algorithms.




\section{A combinatorial algorithms for the nucleolus of the uncapacitated single market game}\label{sec:comb}

In this section, we consider the single market case with  simplified notations.  Each company~$i$ has a demand $\demand_i$ and earns a per-unit profit $\alpha_i$, with profits ordered such that $\alpha_1 \geq \alpha_2 \geq \cdots \geq \alpha_n \geq 0$. The value of a coalition $S \subseteq N$ is given by $v(S) = \alpha_i \demand(S)$, where $i \in S$ is the company with the highest profit, i.e., the smallest index such that $S \cap {1, \dots, i-1} = \emptyset$. We rely on an instantiation of the Maschler scheme presented in Section~\ref{definition_nuc} where for $k\geq 1$, $\mathcal{F}^k=\{S: S\in span((\cup_{i\in F^k} \{\{i\}\}) \cup \{N\})\}$ for some $F^k\subseteq N\setminus 1$. 


In this case, we may restrict the family of sets $\mathcal{F}^k$ for constraints $x(S) = c^k(S)$ in (\ref{coreconsk}) to the family $\varphi^k:= (\cup_{i\in F^k} \{\{i\},  N\setminus i\}) \cup \{N\}$: it is easy to see that $\F^k=span(\varphi^k)$. Therefore, (\ref{coreconsk}) reduces to:

\vspace{-0.8cm}
\begin{align}
&\mbox{max } \epsilon \nonumber \\
(M^k) \ \ \ \ & x(S)\geq v(S) + \epsilon \ \ \  \forall S\in 2^N \setminus \mathcal{F}^k \label{coreconsk_} \\
&x(S) = c^k(S) \ \ \ \ \ \  \forall S\in \varphi^k. \nonumber
\end{align}

\vspace{-0.3cm}
$F^k$ represents a set of components for which the nucleolus value is known. We initially set $F^1=\emptyset$ (and therefore $\varphi^1=\{N\}$ and $\mathcal{F}^1=\{\emptyset,N\}$), and, at the end of each iteration $k$ we built  a strict superset $F^{k+1}$ of $F^{k}$ (still not containing $1$) that will be used for the next iteration until $F^{k+1}=N\setminus 1$. We also iteratively build an optimal solution $(x^k,\epsilon^k)$ to $(M^k)$ from an optimal solution $(x^{k-1},\epsilon^{k-1})$ to $(M^{k-1})$. Initially, for $k=1$, we use $(x^0,\epsilon^0)$, where $x^0_i=\demand_i \cdot \alpha_1$ for all $i\in N$ and $\epsilon^0=0$ ; which is feasible for $(M^1)$ by Lemma~\ref{dual}. This is done though Lemma~\ref{primal-dual}. The scheme will then converge in at most $n-1$ iterations. When $F^{k+1}= N\setminus 1$, $x^{k}$ is the nucleolus.  We point out that we will maintain the following property at each iteration $k\geq 1$:

\begin{center}
${\mathcal H}_k:$ for each $i \not\in F^k, i\neq 1$, $x^{k-1}(N\setminus i)=v(N\setminus i) + \epsilon^{k-1}$.    
\end{center}

\begin{lemma}\label{primal-dual}
 Let $(x^{k-1},\epsilon^{k-1})$ be a feasible solution to $(M^k)$ satisfying ${\mathcal H}_k$, for $k\geq 1$ and $F^k\subsetneq N\setminus 1$.  Let $\mu = \min_{S \not\in \F^k, 1\not\in S} \frac{x^{k-1}(S) - v(S) - \epsilon^{k-1} }{1+|S\setminus F^k|}$ and $\bar{S}$ a minimizer. The solution $(x^k,\epsilon^k)$ defined by

$$\begin{array}{l}
    {x}^k_i=\left\lbrace
\begin{array}{ll}
{x}^{k-1}_i +\mu \cdot (n-1-|F^k|)& \mbox{if } i=1,\\
{x}^{k-1}_i - \mu  & \mbox{if }i\not\in F^k, i\neq 1,\\
{x}^{k-1}_i & \mbox{otherwise.}
\end{array}
\right.\\
\\
\epsilon^k=\epsilon^{k-1}+\mu 

\end{array}$$

\noindent is an optimal solution to $(M^k)$. Moreover, the components of the nucleolus $\nu$ are such that $\nu_i=v(N)-v(N\setminus i)-\epsilon^k$ for $i\in \bar{S}\setminus F^k$ and, defining $F^{k+1}=\bar{S}\cup F^k$, $(x^k,\epsilon^k)$ is a feasible solution to $(M^{k+1})$ satisfying ${\mathcal H}_{k+1}$.

\end{lemma}
 
\newcommand{\myproofblockF}{ 
 \ifnotshowproofs
   \subsection{Proof of Lemma~\ref{primal-dual}}
 \fi 
\begin{proof}

First note that the problem ($M^k$) is the dual ($D^k)$ of problem $(P^k)$ below:\\

\begin{tabular}{cc}
\begin{minipage}[t]{0.55\textwidth}
\begin{tabular}{cc}
($P^k$) & $
\begin{array}{ll}
\min & - \sum_{S\in \varphi^k} c^k(S) y_S - \sum_{S\notin \F^k} v(S) y_S  \\
&A y = 0\\
&\sum_{S\not\in \F^k} y_S = 1\\
& y_S \geq 0, \forall S \not\in \F^k\\
\end{array}
$\\
\end{tabular}
\end{minipage}&
\begin{minipage}[t]{0.55\textwidth}
\begin{tabular}{cc}
($D^k$) & $
\begin{array}{ll}
\max & \epsilon\\
&x A_S \geq v(S) + \epsilon, \forall S \not\in \F^k \\
& x A_S = c^k(S), \forall S \in \varphi^k
\end{array}
$\\
\end{tabular}
\end{minipage}\\
\end{tabular}\\

where $A$ is a matrix whose columns $A_S$ are the characteristic vectors of the sets $S\in 2^N$. We will solve $(D^k)\equiv (P^k)$ using the primal-dual algorithm,  described in~\citep{papadimitriou1998combinatorial}, and applied to $(P^k)$. We follow the description and terminology from~\citep{papadimitriou1998combinatorial}. Interestingly the primal-dual algorithm converges in a single iteration in this setting! The primal-dual algorithm updates the starting feasible dual solution $(x^{k-1},\epsilon^{k-1})$ to $(D^k)$ through the iterative solution of a pair of auxiliary programs: the restricted primal ($RP$) and the dual of the restricted primal ($DRP$). 
 

\smallskip
 Suppose therefore that we are given a feasible solution $(x,\epsilon)$ to $(D^{k})=(M^k)$ with $F^k\subsetneq N\setminus 1$ that satisfies: for each $i \not\in F^k, i\neq 1$, $x(N\setminus i)=v(N\setminus i) + \epsilon$ : originally it is satisfied by $(x^{k-1},\epsilon^{k-1})$ by property $\mathcal{H}_k$.
As we want to apply the primal-dual algorithm to solve $(P^k)$,  we want to check optimality of $(x,\epsilon)$ using complementary slackness. It means we want to know whether there exists $y$ feasible solution to $P^k$ with $y_S=0$ whenever $S\not\in \F^k$ and $xA_S > v(S) + \epsilon$? The answer is ``yes" if and only if the optimal solution to the restricted primal $RP$ below has value 0 (we let ($DRP$) be the dual of ($RP$)):\\

\noindent\scalebox{0.9}{\begin{tabular}{cc}
\begin{minipage}[t]{0.5\textwidth}
\begin{tabular}[t]{cc}
($RP$) & $
\begin{array}{ll}
\min  & \delta \\
&A y = 0\\
&\sum_{S\not\in \F^k} y_S + \delta = 1\\
& y_S \geq 0, \forall S \not\in \F^k \\
& y_S = 0, \forall S\not\in \F^k: xA_S > v(S) + \epsilon\\
& \delta \geq 0\\
\end{array}
$\\
\end{tabular}
\end{minipage}&
\begin{minipage}[t]{0.5\textwidth}
\begin{tabular}[t]{cc}
($DRP$) & $
\begin{array}{ll}
\max & \tilde{\epsilon}\\
&\tilde{x} A_S  \geq \tilde\epsilon, \forall S \not\in \F^k: xA_S = v(S) + \epsilon  \\
& \tilde{x} A_S = 0, \forall S \in \varphi^k\\
& \tilde{\epsilon} \leq 1
\end{array}
$\\
\end{tabular}
\end{minipage}\\
\end{tabular}}\\

Observe first that ($RP$) has a (optimal) solution of value 0 if and only if there is a feasible solution to the following system:
\smallskip

\begin{tabular}{cc}
(I) & $
\begin{array}{l}
A y = 0\\
\sum_{S\not\in \F^k} y_S  > 0\\
 y_S \geq 0, \forall S \not\in \F^k \\
 y_S = 0, \forall S\not\in \F^k: xA_S > v(S) + \epsilon\\
\end{array}
$\\
\end{tabular}\\

as in this case we can scale $y$ to guarantee $\sum_{S\not\in \F^k} y_S =1$. Moreover, it follows that if ($RP$) does not have value $0$ then it has value $1$, as $y=0, \delta =1$ is a solution to ($RP$) (and if there is a solution ($RP$) with $1>\delta>0$, then we have a solution of value 0 to (I) and thus a solution to ($RP$), as we just observed).

\begin{claim}\label{drp}
Let $(x,\epsilon)$ be a feasible solution to $(D^k)$ with $F^k \subsetneq N\setminus 1$. Then:
\begin{itemize}
    \item[(i)] if there exists $\bar{S}$: $\bar{S}\not\in \F^k$, $x A_{\bar{S}} = v({\bar{S}}) + \epsilon$ and $1\not\in \bar{S}$, then $RP$ has value $0$;
    \item[(ii)] if vice versa each $\bar{S}$: $\bar{S}\not\in \F^k$, $x A_{\bar{S}} = v({\bar{S}}) + \epsilon$ is such that $1\in \bar{S}$, then $RP$ has value $1$
\end{itemize} 
\end{claim}
\begin{proof}

$(i)$. It is enough to prove that, in this case, there exists a solution $y$ that satisfies (I). Note that $A_{\bar{S}} + \sum_{i\in \bar{S}}  A_{N\setminus i} = |\bar{S}| A_N$. Hence, if we let $y_{\bar{S}} = 1$, $y_{N\setminus i} = 1$ for each $i\in \bar{S}$, $y_N = -|\bar{S}|$ and $y_S = 0$ for all the other sets $S$, we have $A y=0$. We now show that $y$ satisfies the other constraints in (I). First, $y$ satisfies constraints $y_S \geq 0, \forall S \not\in \F^k$, as the only negative value is $y_n$ and $N\in \varphi^k\subseteq \F^k$. Then $y$ satisfies constraints $y_S = 0, \forall S\not\in \F^k: xA_S > v(S) + \epsilon$. It is enough to check that all non zero values of $y$ are for sets $S$ with $S\in\F^k$ or $xA_S = v(S) + \epsilon$: $x A_{\bar{S}} = v({\bar{S}}) + \epsilon$ ;  $x(N\setminus i)=v(N\setminus i)+\epsilon$ for all $i\neq 1, i\not\in F^k$ by property $\mathcal{H}_k$ since $1\not\in \bar{S}$ ; $N\setminus i\in \varphi^k\subseteq \F^k$ for $i\in F^k$ ; and $N\in \varphi^k\subseteq \F^k$. Finally we have $\sum_{S\not\in \F^k} y_S >0$ by construction as the only negative value is for $y_N$ and $N\in \F^k$. Note that $y$ is then a dual optimal solution to $(D^k)$.

$(ii)$. Consider the solution to $DRP$ with $\tilde{\epsilon}=1$ and $\tilde{x}$ such that: $\tilde{x}(1)=n-1-|F^k|$; $\tilde{x}_i=-1$ for all $i \not\in F^k, i\neq 1$; $\tilde{x}_i=0$ for all $i\in F^k$. It has value $1$ so let us check that it is feasible. We check first that $\tilde{x}A_S=0$ for all $S\in \varphi^k$: for  $S=\{i\}$ for some $i\in F^k$,  we have $\tilde{x} A_S = 0$ ; for $S=N$  $\tilde{x} A_N= n-1-|F^k| - (|N \setminus F^k| -1)  = 0$ ;  for $S=N\setminus i$ for some $i\in F^k$, this is a consequence of the first two cases.  Now consider any set $S$ such that $S\not\in \F^k$ and $xA_S = v(S) + \epsilon$: by hypothesis $1\in S$. In this case, $\tilde{x} A_S = n-1-|F^k| - (|S\setminus F^k| -1)=|N\setminus F^k|-|S\setminus F^k|$. But $S\setminus F^k\neq N\setminus F^k$ as otherwise $S\in span(\cup_{i\in F^k} \{\{i\}\}\cup \{N\})$ and therefore $\tilde{x} A_S \geq 1=\tilde{\epsilon}$. Hence $(\tilde{x}, \tilde{\epsilon})$ is a feasible solution to $DRP$ with value 1. \hfill $\blacksquare$  
\end{proof}

{ Now we apply the previous claim to $(x^{k-1},\epsilon^{k-1})$}:  in order to check optimality of the current solution $(x^{k-1},\epsilon^{k-1})$, we only need to check whether there exists $\bar{S}$: $\bar{S}\not\in \F^k : x^{k-1} A_{\bar{S}} = v({\bar{S}}) + \epsilon^{k-1}$ with $1\not\in \bar{S}$. \\

\noindent (1) If there exists such a set $\bar{S}$, $(x^k,\epsilon^{k})$, which is equal then to $(x^{k-1},\epsilon^{k-1})$ as $\mu=0$, is an optimal solution to $(M^k)$ by Claim (i). Besides, from the proof of (i), we can exhibit a dual optimal solution for $(P^k)$ and by complementary slackness any optimal solution $(x,\epsilon)$ to $(D^k)$ will satisfy $x(N\setminus i)=v(N\setminus i)+\epsilon$ for all $i\in \bar{S}\setminus F^k$, and $x(N)=v(N)$ (by feasibility). This is the case in particular for $(x^{k},\epsilon^{k})$ and also for $(\nu,\epsilon^{k})$, where $\nu$ is the nucleolus. We can deduce $\nu_i=v(N)-v(N\setminus i) -\epsilon^{k}$ for all $i\in \bar{S}\setminus F^k$. We can then define $F^{k+1}=\bar{S}\cup F^k$: note there must exist an element in $F^{k+1}\setminus F^k=\bar{S}\setminus F^k$ otherwise $\bar{S}$ is a subset of $F^k$, and hence in $\F^k$, a contradiction. \\

\noindent (2) If vice versa, each $\bar{S}$: $\bar{S}\not\in \F^k$, $x^{k-1} A_{\bar{S}} = v({\bar{S}}) + \epsilon^{k-1}$ is such that $1\in \bar{S}$ then from the proof of fact (ii) in the claim, $(\tilde{x}, {\tilde{\epsilon}})$ -- with ${\tilde{\epsilon}}=1$, $\tilde{x}(1)=n-1-|F^k|$; $\tilde{x}_i=-1$ for all $i \not\in F^k, i\neq 1$; $\tilde{x}_i=0$ for all $i\in F^k$ -- is an optimal solution to $DRP$. According to the primal-dual algorithm, we can use $(\tilde{x},\tilde{\epsilon} )$ to update $(x^{k-1},\epsilon^{k-1})$. Namely, we look for the maximum  increase in the direction of $(\tilde{x},\tilde{\epsilon})$ by finding the maximum  $\tau\geq 0$ such that $(x^{k-1},\epsilon^{k-1}) + \tau (\tilde{x}, \tilde{\epsilon})$ remains feasible for $D$. Remember again from the proof of fact (ii) that: if $S\in \varphi^k$, then $\tilde{x} A_S = 0$; if $S\not\in \F^k$ and $1\in S$, then $\tilde{x} A_S\geq \tilde{\epsilon}$. Hence, the restriction can only come from a set $S\not\in \F^k$ with $1\not\in S$: in this case, $x^{k-1} A_S > v(S) + \epsilon^{k-1}$ (otherwise the value of $RP$ would be 0 - and we are in case (1)) and $\tilde{x} A_S < \tilde{\epsilon}$ (this holds for all $S$ not taking $1$). Thus $\tau$ can be defined as follows: 

\smallskip
$\tau = \min_{S \not\in \F^k, 1\not\in S} \frac{x^{k-1} A_S - v(S) - \epsilon^{k-1}}{\tilde{\epsilon} - \tilde{x} A_S}= \min_{S \not\in \F^k, 1\not\in S} \frac{x^{k-1}(S) - v(S) - \epsilon^{k-1}}{1+|S\setminus F^k|}=\mu$

\smallskip
 Note that in that case $\tau = \mu > 0$ as otherwise $RP$ would have value $0$. Let $\bar{S}$ be the argmin in the previous formula. We have $(x^k,\epsilon^k)=(x^{k-1},\epsilon^{k-1})+ \mu (\tilde{x},\tilde{\epsilon})$.  Obviously $(x^k,\epsilon^k)$ satisfies $x^k A_{\bar{S}} = v({\bar{S}}) + \epsilon^k$ and,  by the claim applied to $(x^k,\epsilon^k)$, $RP$ has now value zero and is thus optimal. This proves that the solution $(x^{k},\epsilon^{k})$ is optimal for $(M^k)$ and we can conclude that $\nu_i=v(N)-v(N\setminus i) -\epsilon^{k}$ for all $i\in \bar{S}\setminus F^k$ and define $F^{k+1}=\bar{S}\cup F^k$ following the arguments in (1).

We are left to prove that $(x^k,\epsilon^k)$ satisfies ${\mathcal H}_{k+1}$, meaning that for each $i \not\in F^{k+1}, i\neq 1$, $x^{k}(N\setminus i)=v(N\setminus i) + \epsilon^{k}$. By hypothesis, we have $x^{k-1} A_{N\setminus i}=v({N\setminus i}) + \epsilon^{k-1}$ for all $i\neq 1$, $i\not\in F^k$. If $\mu=0$, $x^{k}(N\setminus i)=v(N\setminus i) + \epsilon^{k}$ holds for each $i \not\in F^{k+1}, i\neq 1$ as $(x^k,\epsilon^k)=(x^{k-1},\epsilon^{k-1})$ and $F^{k+1}\supset F^k$. Else, by construction, we have that $\tilde{x} A_{N\setminus i}= \tilde{\epsilon}$ for all $i\neq 1$, $i\not\in F^k$ and thus also $x^k A_{N\setminus i}=v({N\setminus i}) + \epsilon^k$ for all $i\neq 1$,  $i\not\in F^k$, and therefore for $i\neq 1, i\not\in F^{k^+1}$ as well since $F^{k+1}\supset F^k$.

\qed  \end{proof} 
}
\ifshowproofs 
 \myproofblockF
\fi 
Lemma~\ref{primal-dual} can be used iteratively until $F^{k+1}=N\setminus 1$ starting from $(x^0,\epsilon^0)$ where $x^0_i=\demand_i \cdot \alpha_1$ for all $i\in N$ and $\epsilon^0=0$ (it is feasible for $(M^1)$ by Lemma~\ref{dual} as already discussed and it satisfies trivially property $\mathcal{H}_1$. We therefore have the following theorem.

 \begin{theorem}\label{th:main}
 The nucleolus can be found in  time $O(n^4)$ for single market uncapacitated production-distribution games.
 \end{theorem}
 
\newcommand{\myproofblockG}{ 
 \ifnotshowproofs
    \subsection{Proof of Theorem~\ref{th:main}}
 \fi 
\begin{proof}
We simply need to show that the point $(x^k,\epsilon^k)$ can be computed in time $O(n^3)$ in Lemma~\ref{primal-dual}.  

 The computation of $\mu$ is easy. We enumerate over all choices for $i\neq 1$ and over the possible values for $t:=|S\setminus F^k|$, from $1$ to at most $n-1-|F^k|$. For a fixed $i$ and $t$, we solve a problem of the form $\min_{S \in \mathcal{S}_i: S\notin \F^k, |S\setminus F^k|=t} \frac{w(S) - \epsilon^{k-1}}{1+t}$ (hence we have to solve $O(n^2)$ problems), where $\mathcal{S}_i$ is the family of sets $S$ such that $i\in S$ and $S\cap \{1, 2, \ldots, i-1\}=\emptyset$, and $w_l = x_l - \alpha_i \cdot \demand_l$ for $l=i,...,n$. This problem is equivalent to $\min_{S \in \mathcal{S}_i: S\notin \F^k, |S\setminus F^k|=t} w(S)$ (as $t$ and $\epsilon^{k-1}$ are fixed) and then to $\min_{S \in \mathcal{S}_i: S\not\subseteq F^k, |S\setminus F^k|=t} w(S)$. To see the latter equivalence, observe first that $S\in \F^k=span(\cup_{i\in F^k} \{\{i\}\} \cup \{N\})$ if and only if $S\subseteq F^k$ or $S=N\setminus T$ with $T\subseteq F^k$. Hence $S\not\in \F ^k$ means simultaneously $S\not \subseteq F^k$ and $N\setminus S\not\subseteq F^k$. Now, since $i\neq 1$, we have that for all $S\in \mathcal{S}_i$, $1\notin S$, so $N\setminus S\not\subseteq F^k$ is always satisfied. The solution is then trivial: we take $i$ and all elements $j$ that are in $F^k\cap\{i+1, \ldots, n\}$ such that $w_j<0$ and we add either the smallest (with respect to the weight $w$) $t$ or $t-1$ elements (depending whether $i\in F^k$ or not) in $(N\setminus F^k)\cap\{i+1, \ldots, n\}$.

\qed
\end{proof} 
}

\ifshowproofs 
 \myproofblockG
\fi 




\citet{Granot98} and~\citet{reijnierse98} independently introduced the concept of {\em characterization set}, which is a collection of coalitions ${\cal Q}\subseteq 2^N$ that determine the nucleolus by itself. In other words, ${\cal Q}$ is a characterization set if replacing $e(x) = (e_S(x))_{S\in 2^N}$ by $e^Q(x) = (e_S(x))_{S\in {\cal Q}}$ in the computation of the nucleolus does not affect the nucleolus itself. 
\citet{huberman80} and~\citet{Granot98} were able to provide sufficient conditions for a set of coalitions to be a characterization set; a good overview of these results is given in~\citet{skizlai}. To the best of our knowledge, none of these conditions is useful to show that in our framework there exists a characterization set of polynomial length; in contrast, when $\alpha_1 > ... > \alpha_n > 0$ and $d_i >0, i\in N$, these conditions — once again they are just sufficient — imply that ${\cal Q} = 2^N$. However, we can claim the following:

\begin{theorem}\label{characterization}
For the nucleolus of the single market uncapacitated production-distribution game, there always exists a family of $2n-1$ sets that is a characterization set. 
 \end{theorem}

\newcommand{\myproofblockH}{ 
 \ifnotshowproofs
    \subsection{Proof of Theorem~\ref{characterization}}
 \fi 
\begin{proof}
It is enough to choose the characterization set ${\cal Q}$ as the union of the $n-1$ sets $N\setminus \{i\}, i\neq 1$, the set $N$ and, for each iteration of the primal-dual algorithm, a set $\bar S$ that is a minimizer for the value of $\mu$ (see the statement and the proof of Lemma~\ref{primal-dual}).

\qed
\end{proof} 
}

\ifshowproofs 
 \myproofblockH
\fi

However, we do not know how to find this family, or any other of polynomial length, {\em a-priori}.

\section{Open questions and a final remark}\label{conclusion}

While we have shown that production-distribution games are linear production games, it remains unclear whether they coincide with this class. 
The most intriguing open question is however whether computing the nucleolus is computationally hard for the {\em capacitated} production-distribution game already for the {\em single} market case. Another open question is whether computing the nucleolus is computationally hard in the uncapacitated setting when the number of markets is unbounded.

Concerning the latter question, recall that by Remark~\ref{decomposition}, in the uncapacitated setting, it is possible to associate to the multi market game $(N,v)$ a single market game $(N, v_j)$ for each market $j\in M$, so that $(N, v)=(N,\sum_{j\in M} v_j)$. We believe that, although the solution provided by the sum of the nucleolus of the single market games $(N, v_j)$ does not in general coincide with the nucleolus of $(N,v)$ (see Example~\ref{theexample}), it is still as a sensible solution in practice. In fact, the solution to each single market game $(N, v_j)$ can be computed quickly using the combinatorial algorithm presented in Section~\ref{sec:comb} and it seems reasonable to argue on each market separately. Besides, the corresponding solution is in the core of $(N,v)$ as, the core of each game $(N,v_j)$ is nonempty, the nucleolus is in this core (as always, for games with nonempty core), and the sum of solutions in the core of $(N,v_j)$ is obviously in the core of $(N,\sum_{j\in M} v_j)=(N,v)$.

A last open question concerns Theorem~\ref{characterization}. Is it is possible to find a-priori a characterization set of polynomial length? Some results in this direction are shown in \citet{jens24} for the so-called happy nucleolus of set covering games.

\bibliographystyle{plainnat}
\bibliography{biblio}

\appendix

\newpage

\newpage

\ifnotshowproofs 
\section{Proofs}

\myproofblockA
\myproofblockB
\myproofblockC
\myproofblockD
\myproofblockE
\myproofblockF
\myproofblockG
\myproofblockH

\fi 

\section{Computing the nucleolus with the framework of K{\"o}nemann and Toth}\label{dpsep}

We show how to exploit the disjunctions presented in Section~\ref{singleuncap} to leverage the framework introduced by K{\"o}nemann and Toth~\citep{konemann2020general} and compute alternatively the nucleolus of uncapacitated production-distribution games with a fixed number of markets in polynomial time. 

K{\"o}nemann and Toth~\citep{konemann2020general} showed that the nucleolus can be computed via the Maschler scheme~\citep{maschler1979geometric} in polynomial time when the {\em minimum excess coalition problem} can be solved in polynomial time via an {\em integral dynamic program}. The minimum excess coalition problem is the following: given $x\in\setR^n$, find $S\in \argmin \{x(S)-v(S), S\subseteq N\}$. A formal definition of integral dynamic programs is given in~\citep{konemann2020general}. For this paper, it is enough to know that dynamic programs that can be phrased as shortest path problem in directed acyclic fit this framework~\citep{konemann2020general}.

\subsection{Single market}

 We show first that, for the uncapacitated production-distribution game with a single market, the minimum excess coalition problem can be indeed solved in polynomial time through shortest path in a directed acyclic graph. In the single market game, we simplify notations and assume the following. The commodity product is sold at market price $r$ and the (per unit) production cost for company $i$ is $c_i$. Each company $i$ owns a demand $d_i$. We let $\alpha_i:= r-c_i$ and we assume here without loss of generality that $\alpha_1\geq \alpha_2 \geq ... \geq \alpha_n \geq 0$.  the value $v(S)$ of a coalition $S\subseteq N$ in the single market game is therefore equal to:

\begin{equation}\label{sinvuncap}
v(S) = (\sum_{i\in S} d_i) \cdot \max_{k\in S}\alpha_k.
\end{equation}

  We enumerate over the $n$ possible values in $\{1, \ldots, n\}$ for the minimum index $j$ of a player in a feasible solution $S\neq \emptyset$  to the minimum excess coalition problem: then the problem $\argmin \{x(S)-v(S), S\subseteq N \}$ decomposes into $n$ problems of the form $\argmin \{x(S)-\alpha_j \cdot \demand(S): S\subseteq \{j,...,n\} \mbox{ and } j\in S\}$ (plus a comparison with $x(\emptyset)-v(\emptyset)=0$). The latter problem is of the form $\argmin \{\sum_{i\in S} y_i : S\subseteq \{j,...,n\} \mbox{ and } j\in S\}$, with $y_i:=x_i - \alpha_j \cdot \demand_i$. The problem can be modeled as a shortest path problem -- of course, it is using a sledgehammer to crack a nut! -- from node $j$ to node $n+1$ in a directed acyclic graph $G_j$. The graph $G_j$ has $n-j+2$ nodes (one node for each element in $\{j,...,n\}$, and one extra node, numbered $n+1$, representing a sink), there is an arc $(p,q)$ if and only if $p, q\in \{j, \ldots, n+1\}$ and $p<q$, and the cost of an arc $(p,q)$ is simply $y_p$. The nodes of a shortest path from $j$ to $n+1$, different from $n+1$, are  the elements of a coalition $S$ with minimum excess among the coalitions $S\subseteq \{j,...,n\}$ with $j\in S$. 

 Now observe that the disjunction over the different values of $j$ can be represented by a shortest path from a source node $s$ to a sink node $t$ in a larger directed acyclic graph obtained by taking the disjoint union of the directed acyclic graphs $G_j$, $j=1,...,n$ and adding a node $s$ with an arc of zero cost to the source node $j$ of $G_j$ for any $j=1,...,n$ and a node $t$ with an arc of zero cost from each sink node $n+1$ of $G_j$. In addition, we can add an arc from $s$ to $t$ of cost zero to model the empty set. 

\subsection{Fixed number of market}

For the multi market case, we can follow a similar rationale. Remember that the value $v(S)$ of a coalition $S\subseteq N$ is equal to (see~\eqref{vuncap}):

$$v(S) = \sum_{j\in M} (\sum_{i\in S} d_{ij}) \cdot \max_{k\in S}\alpha_{kj}.$$

We enumerate over the (at most) $n^{m}$ possible values of the minimum index vector $(\argmax_{i\in S} \alpha_{ij} ,j=1,...,m)$ of a feasible solution  $S\neq \emptyset$ to the minimum excess problem. Let $\mathcal{S}^k_j:=\{i\in N:\alpha_{ij} \leq \alpha_{kj}\}$. The problem $\argmin \{x(S)-v(S), S\subseteq N \}$ then decomposes into (at most) $n^m$ problems of the form $$\argmin \{x(S)- \sum_{j\in M} \alpha_{i_j,j} \cdot \sum_{i\in S} d_{ij}: S  \ s.t. \{{i}_1, ...,{i}_m\}\subseteq S\subseteq \cap_{j\in M} \mathcal{S}^{i_j}_j\}$$
for any choice of $({i}_1, ...,{i}_m) \in [n]^m$ (plus a comparison with $x(\emptyset)-v(\emptyset)=0$). Note that some of the problems might be unfeasible (when there exists $j$ such that $i_j\not \in \cap_{j\in M} \mathcal{S}^{i_j}_j\}$ ) and we simply discard those. The latter problem is of the form $\argmin \{\sum_{i\in S} y_i: S  \ s.t. \{{i}_1, ...,{i}_m\}\subseteq S\subseteq \cap_{j\in M} \mathcal{S}^{i_j}_j\}$ with $y_i=x_i - \sum_{j\in M} \alpha_{i_j,j} \ d_{ij}$. Each problem can again be formulated as a shortest path problem in a directed acyclic graph $G_{i_1,...,i_m}$. The graph $G_{i_1,...,i_m}$ has node set $V=\{0\}\cup \cap_{j\in M} \mathcal{S}^{i_j}_j \cup\{n+1\}$ and there is an arc $(p,q)$ if and only if $p, q\in V$ and $p<q$. The cost of an arc $(p,q)$ is $y_p$ if $p\neq 0$ and $\sum_{j\in M} y_{i_j}$ if $p=0$. The disjunction over the different (feasible) values of $({i}_1, ...,{i}_m)$) and $S=\emptyset$ can be treated similarly to the single market case. When $m$ is fixed the corresponding graph is polynomial in the input size and so is the shortest path problem and thus the computation of the nucleolus thanks to~\citep{konemann2020general}.

\begin{remark}
   Note that the corresponding algorithms are computationally prohibitive, and the other approaches proposed  in this manuscript are considerably more efficient. 
\end{remark}

\end{document}
\section{Shapley values for the uncapacitated single market production-distribution game}\label{Shap}

Besides the nucleolus also the Shapley value~\citep{shapley1953value} offers a desirable payoff-sharing solution in cooperative games, thanks to the property of being the unique cost sharing mechanism that satisfies the following axioms: efficiency; symmetry; linearity; null player property. The Shapley value $S_i(v)$ for player $i\in N$ in a cooperative game $(N,v)$ is equal to:

$$S_i(v) =  \sum_{T\subseteq N: i\in T}  \frac{(|T|-1)! (n - |T|)!}{n!} (v(T) - v(T \setminus \{i\})) =  \sum_{T\subseteq N: i\in T}  \beta_{|T|} \cdot (v(T) - v(T \setminus \{i\})),$$ where we let $\beta_{|T|} := \frac{(|T|-1)! (n - |T|)!}{n!}$. If we now deal with the uncapacitated single market production-distribution game, and rely on the assumptions and definitions from Section~\ref{singleuncap} (see the first paragraph), it is straightforward to see that:

\[
\setlength\arraycolsep{1pt}
v(T) - v(T \setminus \{i\}) =  \left\{
\begin{array}{ll}
 \demand_i \cdot \alpha_h & \mbox{\ \ if\ } T\in  {\cal S}_h, h<i  \\
 &\\
  \demand(T\setminus i)\cdot (\alpha_i -\alpha_h) + \demand_i \cdot \alpha_i & \mbox{\ \ if\ } T = \{i\}\cup S, S \in {\cal S}_h, h>i\\
 &\\
 \demand_i\cdot \alpha_i & \mbox{\ \ if\ } T =\{i\} 
 \end{array} 
\right.
\]

where, for each $i=1,...,n$, ${\cal S}_i =\{S\subseteq N: S \cap\{1, 2, \ldots, i\} = \{i\}\}$. It is now pretty straightforward to compute the Shapley value efficiently, it is simply a matter of computing appropriately the number of set $T\subseteq N: i\in T$ that fall into these three cases. Then a rather tedious calculus shows that:
$$S_i(v) =  \demand_i \sum_{h\in \{1,..,i-1\}}\  \alpha_h \sum_{l=0..n-h-1}  {n-h-1 \choose l} \beta_{l+2}\ + \ \alpha_i\frac{\demand_i}{n} \ + \sum_{h\in \{i+1,..,n\}} \alpha_i \demand_i   \sum_{l=0..n-h}  {n-h \choose l}\cdot\beta_{l+2}\ +$$

\begin{equation}
    + \sum_{h\in \{i+1,..,n\}} (\alpha_h - \alpha_i) (\demand_h\  \sum_{l=0..n-h}  {n-h \choose l}\beta_{l+2}\ +\ \sum_{j \in \{h+1, ..n\}} \demand_j\  \sum_{l=0..n-h-1}  {n-h-1 \choose l}\beta_{l+3}) \label{eq_shapley}
\end{equation} 

and therefore the Shapley value can be computed in polynomial time. We recall that however in general the Shapley value needs not be in the core. 

The fact that $(N, v)=(N,\sum_{j\in M} v_j)$ combined with linearity of the Shapley shows that (we simply need to sum the values $S_i(v_j)$ that can be computed from (10)) :

\begin{corollary}
The Shapley value can be found in polynomial time for the uncapacited multi market production-distribution game.
\end{corollary}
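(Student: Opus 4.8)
The plan is to obtain this corollary as an immediate consequence of three ingredients already in hand: the additive decomposition $(N,v)=(N,\sum_{j\in M} v_j)$ established just above, the linearity axiom of the Shapley value recalled in Section~\ref{Shap}, and the polynomial-time computability of the Shapley value for each single-market subgame via formula~(\ref{eq_shapley}). The heavy lifting is done entirely by additivity, so the argument is short.

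First I would invoke linearity of the Shapley value. Since $v(S)=\sum_{j\in M} v_j(S)$ for every coalition $S\subseteq N$, the linearity axiom gives $S_i(v)=\sum_{j\in M} S_i(v_j)$ for every player $i\in N$. Hence it suffices to compute each of the $m$ single-market Shapley vectors $(S_i(v_j))_{i\in N}$ and sum them coordinatewise; the summation itself costs only $O(nm)$ arithmetic operations.

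Next I would argue that each term $S_i(v_j)$ is computable in polynomial time. For a fixed $j$, the game $(N,v_j)$ with $v_j(S)=\max_{i\in S}\alpha_{ij}\cdot d_j(S)$ is exactly an uncapacitated single-market production-distribution game in the sense of Section~\ref{singleuncap}: the per-unit profits are $\alpha_{ij}=r_j-c_{ij}\ge 0$ and the demand shares are the $d_{ij}$. The only bookkeeping needed to apply~(\ref{eq_shapley}) is to re-sort the players, for market $j$, so that the profits appear in nonincreasing order, and to account for the fact that the total demand of market $j$ is $d_j$ rather than $1$. Since $v_j$ is homogeneous of degree one in the demand vector and~(\ref{eq_shapley}) is itself linear in the demand fractions, one may either run the formula on the normalized shares $d_{ij}/d_j$ and rescale by $d_j$, or apply it directly with $\lambda_i:=d_{ij}$; either way each evaluation is polynomial in $n$.

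Combining the two steps, $S_i(v)=\sum_{j\in M} S_i(v_j)$ is obtained through $m$ polynomial-time single-market computations followed by a coordinatewise sum, hence in time polynomial in $n$ and $m$. The only point to verify with some care---though it is routine rather than a genuine obstacle---is that each $(N,v_j)$ legitimately instantiates the single-market framework (nonnegative profits, per-market sorting, demand normalization) so that~(\ref{eq_shapley}) applies verbatim; no new structural argument about the multi-market game is required, precisely because linearity of the Shapley value reduces everything to the single-market case already solved.
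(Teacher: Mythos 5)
Your proposal is correct and follows exactly the paper's argument: decompose $v=\sum_{j\in M}v_j$, invoke linearity of the Shapley value to get $S_i(v)=\sum_{j\in M}S_i(v_j)$, and compute each $S_i(v_j)$ in polynomial time via the single-market formula~(\ref{eq_shapley}). The extra care you take about re-sorting players per market and rescaling the demand shares is a sensible (and correct) elaboration of details the paper leaves implicit.
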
